\DeclarePairedDelimiter{\ceil}{\lceil}{\rceil}
  \providecommand\BibTeX{{%
    \normalfont B\kern-0.5em{\scshape i\kern-0.25em b}\kern-0.8em\TeX}}}
\newcommand{\etal}{\textit{et~al. }}
\newcommand{\cost}{\textrm{cost}}
\DeclareMathOperator*{\argmax}{arg\,max}
\DeclareMathOperator*{\argmin}{arg\,min}
\newtheorem{thm}{Theorem}
\theoremstyle{definition}
\newtheorem{defn}[thm]{Definition}
\begin{document}

%%
%% The "title" command has an optional parameter,
%% allowing the author to define a "short title" to be used in page headers.
\title[Measuring and mitigating voting access disparities]{Measuring and mitigating voting access disparities: a study of race and polling locations in Florida and North Carolina}
%Measuring Voting Access
%Quantifying and remedying disparities in access to voting
% Measuring and mitigating voting access disparities
% Measuring and mitigating voting access disparities: a study of race and polling locations in Florida and North Carolina
% Measuring and mitigating disparities in access to voting
% Quantifying and mitigating disparities in access to voting
% voting access: measuring and mitigating polling location access disparities by race
% measuring and mitigating polling location access disparities by race

%%
%% The "author" command and its associated commands are used to define
%% the authors and their affiliations.
%% Of note is the shared affiliation of the first two authors, and the
%% "authornote" and "authornotemark" commands
%% used to denote shared contribution to the research.

\author{Mohsen Abbasi}
\email{mohsena@cs.utah.edu}
\affiliation{
  \institution{University of Utah}
  \country{USA}
}

\author{Suresh Venkatasubramanian}
\email{suresh@brown.edu}
\affiliation{%
  \institution{Brown University}
  \country{USA}
}

\author{Sorelle A. Friedler}
\email{sorelle@cs.haverford.edu}
\affiliation{
  \institution{Haverford College}
  \country{USA}
}

\author{Kristian Lum}
\email{kristianlum@gmail.com}
\affiliation{%
  \institution{University of Pennsylvania}
  \country{USA}
}

\author{Calvin Barrett}
\email{cpbarrett@haverford.edu}
\affiliation{%
  \institution{Haverford College}
  \country{USA}
}

%%
%% By default, the full list of authors will be used in the page
%% headers. Often, this list is too long, and will overlap
%% other information printed in the page headers. This command allows
%% the author to define a more concise list
%% of authors' names for this purpose.
\renewcommand{\shortauthors}{Abbasi, et al.}

%%
%% The abstract is a short summary of the work to be presented in the
%% article.
\begin{abstract}

Voter suppression and associated racial disparities in access to voting are
long-standing civil rights concerns in the United States. Barriers to voting
have taken many forms over the decades.  A history of violent explicit
discouragement has shifted to more subtle access limitations that can include
long lines and wait times, long travel times to reach a polling station, and
other logistical barriers to voting. Our focus in this work is on quantifying disparities in
voting access pertaining to the overall time-to-vote, and how they could be remedied
via a better choice of polling location or provisioning more sites where voters
can cast ballots. However, appropriately calibrating access disparities is
difficult because of the need to account for factors such as population density
and different community expectations for reasonable travel times. 

In this paper, we quantify access to polling locations, developing a methodology
for the calibrated measurement of racial disparities in polling location "load"
and distance to polling locations.  We apply this methodology to a study of
real-world data from Florida and North Carolina to identify disparities in
voting access from the 2020 election.  We also introduce algorithms, with
modifications to handle scale, that can reduce these disparities by suggesting
new polling locations from a given list of identified public locations
(including schools and libraries).  Applying these algorithms on the 2020
election location data also helps to expose and explore tradeoffs between the
cost of allocating more polling locations and the potential impact on access
disparities.  The developed voting access measurement methodology and
algorithmic remediation technique is a first step in better polling location
assignment.

%Prior studies indicate that the time it costs voters to cast their ballot on election day has a significant effect on their decision to vote. In addition, the positive influence of early voting on voter turnout reveals that some find the traditional election-day voting process as burdensome and costly. In this work we study the accessibility of designated polling sites on the 2020 Election Day in the United States, specifically to the voters in Florida and North Carolina. In our analysis we consider two measures as proxies for the time it costs voters to cast their vote: 1. Travel distance to the polling site and 2. Wait lines at the polling site. We look into whether there exist any consequential discrepancies in voters' access to polling sites for different demographics, according to the proposed measures. Our analysis shows various levels of discrepancy in different regions, which are specially more evident in smaller counties within the two states.

%Improving the accessibility of polling sites during elections can increase voter participation. Therefore, the second contribution of this paper is proposal of several scalable methods with different equity objectives to find more appealing alternatives to the designated polling sites on 2020 Election Day, with respect to the proposed access measures. We evaluate these methods both theoretically and empirically.
\end{abstract}

\keywords{Voting Access, Fairness, Measurement, Clustering}

%%
%% The code below is generated by the tool at http://dl.acm.org/ccs.cfm.
%% Please copy and paste the code instead of the example below.
%%
% \begin{CCSXML}
% <ccs2012>
%  <concept>
%   <concept_id>10010520.10010553.10010562</concept_id>
%   <concept_desc>Computer systems organization~Embedded systems</concept_desc>
%   <concept_significance>500</concept_significance>
%  </concept>
%  <concept>
%   <concept_id>10010520.10010575.10010755</concept_id>
%   <concept_desc>Computer systems organization~Redundancy</concept_desc>
%   <concept_significance>300</concept_significance>
%  </concept>
%  <concept>
%   <concept_id>10010520.10010553.10010554</concept_id>
%   <concept_desc>Computer systems organization~Robotics</concept_desc>
%   <concept_significance>100</concept_significance>
%  </concept>
%  <concept>
%   <concept_id>10003033.10003083.10003095</concept_id>
%   <concept_desc>Networks~Network reliability</concept_desc>
%   <concept_significance>100</concept_significance>
%  </concept>
% </ccs2012>
% \end{CCSXML}

% \ccsdesc[500]{Computer systems organization~Embedded systems}
% \ccsdesc[300]{Computer systems organization~Redundancy}
% \ccsdesc{Computer systems organization~Robotics}
% \ccsdesc[100]{Networks~Network reliability}

%%
%% Keywords. The author(s) should pick words that accurately describe
%% the work being presented. Separate the keywords with commas.

%%
%% This command processes the author and affiliation and title
%% information and builds the first part of the formatted document.
\maketitle

\section{Introduction}
\label{polling:intro}

Convenient access to voting is a crucial component of elections. Prior studies have revealed that even after controlling for the variables that account for people's  motivation to vote, access to polling locations has a significant effect on voter turnout~\cite{gimpel2003political}. Long wait times can discourage voting or even cause voters to abandon lines at the polling sites, reduce voters' confidence in their votes being counted or impose financial costs on them~\cite{stewart2015waiting,stein2020waiting}. If voters have to travel long distances to reach a polling location, it  heavily influences how they choose to vote or might cause them to reconsider voting altogether, even if they are already registered to vote~\cite{dyck2005distance}. For example, for many voters it is only feasible to cast a ballot either before or after work. Thus, the ability to access a polling place within reach in these small time windows is the deciding factor on whether a voter casts their vote or not.

Limitations on access to voting are also refracted through a demographic lens. Several studies have identified clear demographic disparities with respect to both distance to polling sites and wait times at said locations~\cite{chen2019racial,joslyn2020distance,stewart2015waiting,dyck2005distance}. For example, in the 2016 US presidential election, voters in predominantly black neighborhoods waited $29\%$ longer at polling locations than those in white neighborhoods~\cite{chen2019racial}. The significant effect that both travel distance and wait times have on voter turnout coupled with their observed disparate effect on different demographics raise serious concerns about whether an election truly is representative of the preferences of all communities.

Quantifying voting access disparities is the first step towards improving access to voting. With a clear measure of access disparity in place, it is then possible to explore different mechanisms to reduce disparities and improve access, whether it be relocating polling locations or adding more locations and options to vote. However, even the simplest measure of access to voting -- the time taken to travel to a polling location -- does not lend itself to an easy measurement of disparities because of how different the living and travel environment might be even from county to county within a state. For example in a metropolitan area with high population density might admit shorter times to access a polling location than a more rural area with a lower population density. It might further be the case that the metropolitan area is dominated by members of one community, and the rural area by members of a different community. But this does not necessarily that the resultant disparity in access times is evidence of overt voter suppression or even some form of structural bias. Rather, this disparity needs to be calibrated against community norms for reasonable travel times, potentially also adjusted against population density that might manifest in variable queueing times at polling locations.

\paragraph{Our Contributions.}

In this paper we develop a novel methodology to quantify and calibrate disparities in access to voting using measures that act as proxies for the time taken to travel to a polling station and the waiting time to vote. We further develop scalable algorithms that can redistribute polling locations in a number of different ways so as to improve access disparities. This investigation is conducted within a specific context: that of voter access disparities in Florida and North Carolina during the 2020 general election. The main findings of our investigation can be summarized as follows:

\begin{itemize}
\item We found that when ``normalized'' distances are considered non-white voters in general had to travel further to the polls in Florida while in North Carolina any racial disparity in travel was smaller.
\item There was some racial variation in ``load'' (our proxy for the waiting time to vote) but this did not suggest any substantive impact. 
\item Our algorithmic interventions (both to minimize travel time as well as waiting time) improved access disparities across the board, while making visible the tradeoff between disparity mitigation and the resources required (in the form of additional polling locations).
\end{itemize}

In section~\ref{sec:data} we discuss our approach in collecting and pre-processing the data needed in our analysis. In  section~\ref{sec:polling-stats} we adopt the two measures proposed above to analyze and compare different racial groups in terms of their access to polling locations in Florida and North Carolina. In section~\ref{sec:sol} we propose three methods that produce fairer alternatives to the polling location selection with respect to the proposed measures, and assess these methods on the Florida and North Carolina case studies in Section \ref{sec:experiments}.  Finally, in Section \ref{sec:conclusions}, we discuss limitations and conclude. 

\subsection{Related Work}
There exist a rich literature on polling location placement and voting. Some of these works are closely related to ours. For example, some have studied how changes to precincts and polling places on Election Day affect voter turnout and whether these changes target voters based on their partisanship or race~\cite{shepherd2021politics,clinton2021polling,yoder2018polling}. And a number of other works have focused on the effects of distance to the polling locations~\cite{cantoni2020precinct,bagwe2020polling,tomkins2021blocks} and wait times at these locations~\cite{pettigrew2021downstream} on voter turnout.

The main algorithmic tools we use are drawn from the literature on \emph{fair clustering} \cite{makarychev2021approximation,chierichetti2017fair,kleindessner2019fair,kleindessner2019guarantees,chen2019proportionally,schmidt2018fair,ahmadian2019clustering,rosner2018privacy,vakilian2021improved,abraham2019fairness,negahbani2021better,Chlamatac2022Approximating}, a sub-field of algorithmic fairness~\cite{dwork2012fairness,feldman2015certifying,hardt2016equality,kusner2017counterfactual,zafar2017fairness,abbasi2019fairness} which has gained a lot of attention in recent years. Fair clustering has often been used as a tool for redistricting\cite{fryer2011measuring,mehrotra1998optimization, levin2019automated} and even fairness in redistricting \cite{stoica2019minimizing}, but there has been little to no research on making \emph{access} to voting more equitable. 
The closest related work investigates equitable service to customers in the context of facility location (e.g. \cite{marsh1994equity,drezner2011note,drezner2007equity}); however,  previously proposed solutions often are either tailored to specific problems or cannot scale to the data sizes we encounter in our problem.

%%% Local Variables:
%%% mode: latex
%%% TeX-master: "main"
%%% End:

\section{Data collection and preprocessing}
\label{sec:data}

In order to conduct our case studies of voting access disparities, there are two key pieces of data we need to collect: voter registration information that includes race and voter residential addresses, and associated polling location information (Section \ref{sec:data_voter_polling}).  In order to perform disparity analyses, we also need latitude and longitude data for each voter and polling location address, as well as latitude and longitude data for the alternative or new polling locations that may be introduced (school and library locations).  The determination of an address' latitude and longitude is known as \emph{geocoding} (Section \ref{sec:data_geocoding}).  Although these data are publicly available, collecting geocoding information is difficult and thus the final data collected presents additional privacy risks to voters beyond what is already publicly available; we have chosen not to make this final collected data public.

\subsection{Voting Rolls and Polling Location Data}
\label{sec:data_voter_polling}
Voting rolls, or voter registration data, usually include the name, address, other voting-related information, and race for all registered voters.  Voting rolls are publicly available for most states in the U.S., though some states charge for access to this data, require the requester to be a state resident, or don't publicly release information about voter race.  Florida and North Carolina were chosen as focus states for our case study because the states are large, have substantial non-white populations, and provide freely available voting rolls that include race information.\footnote{Florida voting rolls are available at: \url{http://flvoters.com/downloads.html}.  Our analyses use the 10/13/2020 data.  North Carolina voting rolls are available at: \url{https://www.ncsbe.gov/results-data/voter-registration-data}. Our analyses use the 11/03/2020 snapshot.} There are approximately 15.1 and 7.3 million voters in the records for Florida and North Carolina, respectively.  Data collected for both states for each voter included: voter's residential address; county, precinct, and congressional district information to determine the voter's associated precinct location; and race as defined in the voting rolls.

In the United States, a \emph{precinct} is the smallest unit electoral districts are divided into by the local government. Each precinct has one polling place designated to it on election day where voters cast their ballots, and more than one precinct may use the same polling place. Voters' residential addresses determine their assigned precinct. Both Florida and North Carolina require their residents to vote at the polling place designated to their precinct. We use the terms polling location and precinct site interchangeably to refer to designated polling places from the 2020 general election.  We collected all polling location addresses for Florida and North Carolina\footnote{Florida polling locations were downloaded from \url{https://dos.myflorida.com/elections/for-voters/voting/}. North Carolina polling locations were downloaded from: \url{https://www.ncsbe.gov/results-data/polling-place-data}.  Our analyses are based on the November 2020 general election polling locations.} and determined each voter's polling location based on their listed county and precinct information from the voting rolls. 6068 and 2662 polling sites are listed in the Florida and North Carolina records, respectively.

In Section \ref{sec:sol} we will consider alternative sites that could serve as polling locations.  In order to do these analyses, we collected the addresses of all schools and libraries in Florida and North Carolina.\footnote{Florida school addresses were downloaded from \url{https://web03.fldoe.org/Schools/schoolreport.asp} and library addresses from \url{https://publiclibraries.com/state/florida/}.  North Carolina school addresses were downloaded following the instructions at \url{https://files.nc.gov/dpi/documents/fbs/accounting/eddie/createreport.pdf} and library addresses from \url{https://statelibrary.ncdcr.gov/ld/about-libraries/library-directory/download}.} In total, the addresses for 5128 and 3265 schools and libraries were collected in Florida and North Carolina, respectively.  Schools and libraries were chosen since these are public sites that are generally well-distributed around a state, with each community having access to a school and a library, and  are often sites chosen to serve as polling locations.

\subsection{Geocoding Addresses}
\label{sec:data_geocoding}
One aspect of voting access disparities that we will consider is voters' distance to the nearest polling location.  In order to determine that, we need the latitude and longitude for each voter's residence, each polling location, and each school or library that could serve as a possible polling location.  The process of determining the latitude and longitude of the addresses for each of these sites (collected as described in Section \ref{sec:data_voter_polling}) is known as \emph{geocoding}.

We used the ArcGIS tool \footnote{\url{https://desktop.arcgis.coml}} for our geocoding queries.  In the ArcGIS geocoder, a \emph{match score} controls how closely addresses have to match their most likely candidate in the reference data.  The minimum match score is set to the default value ($80$), as are all other internal parameters. The match score determines how similar records must be to be declared a match. When geo-locating an address, the address is compared to records in a database. Each of these records has an associated latitude and longitude. The record with the highest match score with that address is chosen as the match and that record's associated latitude and longitude are assigned to the address. If there are no records with a match score above the minimum match score, no match is returned. By using the default minimum match score of 80, we allow for some spelling mistakes and other slight address differences between the address and its match, while maintaining a high degree of match accuracy.

%\sferror{Kristian - any other detail missing here?}

\subsection{Missing Data Analysis}
In order to ensure that no eligible voters were missing from our voting rolls, we chose the latest possible voting roll snapshot that occurred before the 2020 general election.  Since neither state allows same-day voter registration on election day, this includes all voters who were eligible for the election.\footnote{North Carolina allows same-day voter registration during the early voting period; these voters were included in the 11/3/2020 voting rolls snapshot we used since 11/3/2020 was the day of the general election.  Florida allows voter registration until 29 days before the election, so all eligible voters would have been listed in the 10/13/2020 snapshot we used.}  Our lists of valid polling locations, schools, and libraries are also complete.\footnote{To the best of our knowledge, the sources we used to gather addresses for polling locations, schools, and libraries in each state provide a complete list of all such locations.}

However, the geocoding process did not successfully return latitude and longitude matches for all given addresses; in some cases, the provided address was misspelled, incomplete, or potentially missing from the ArcGIS database.  
Within each state, approximately 91 percent of voter addresses were successfully geocoded. Out of the 6068 polling locations in Florida, 5159 were successfully geocoded, and in North Carolina 2250 out of the 2662 polling locations were successfully geocoded. Out of all the collected addresses for schools and libraries within each state, 4499 and 2711 were successfully geocoded and used in the analyses for Florida and North Carolina, respectively.

Voters with residential addresses or polling locations that could not be geocoded were excluded from the data used for analysis in the remainder of this paper, as were polling locations that could not be geocoded. Specifically, in all the following analyses 11.8 and  5.3 million voter records were included for Florida and North Carolina, respectively. In order to ensure this missing data did not skew our voting access disparity analysis, we conducted a missing data analysis. The geocoding success rates per race are given in Figure \ref{fig:geocode_hitrate}.  The middle column in both tables presents the fraction of voters within each racial group whose address was successfully geocoded, and the last column shows the fraction of voters within each group with a successfully geocoded precinct location. Figure \ref{fig:voter_dist} compares the racial distribution of voter records used in our analysis to that of the original data for Florida and North Carolina, respectively. The results indicate that no group is over- or under-represented as the result of the geolocating process. We do not include confidence intervals on summary statistics. Confidence intervals express uncertainty due to sampling from a (sometimes hypothetical) population. In our case, at least for the purposes of Figure \ref{fig:geocode_hitrate}, our data includes everyone in the population of interest-- registered voters in their state. Thus, the empirical rate shown in these tables is-- by definition-- the ``true" value of the parameter for this population. % In what follows, we will thus present our results without confidence intervals or significance values, as a full analysis of all geocoded voter and polling location data.  \sferror{Kristian - take a look at and please update / change this previous sentence.}
%KL note- I will mention the thing about the confidence intervals when we analyze the tables

\begin{figure}[htbp]
    \centering
    \begin{tabular}{cc}
    \textbf{Geocoding success rate in Florida} &
    \textbf{Geocoding success rate in North Carolina}\\

%\begin{table}[htb]
%\caption{Geocoding hit rate in Florida}
%\label{tab:voters_geocoded_fl}
%\centering
\begin{tabular}{lcc}
%\multicolumn{3}{c}{\textbf{Geocoding hit rate in Florida}}\\
\hline
                         Race & \begin{tabular}[c]{@{}c@{}}Geocoded \\ Voters \end{tabular} & \begin{tabular}[c]{@{}c@{}} Voters w/ geocoded \\ polling loc.s \end{tabular} \\ \hline
White, not Hisp.       & 0.91   & 0.83                                                                    \\
Hispanic                  & 0.91   & 0.85                                                                    \\
Black, not Hisp.       & 0.92   & 0.88                                                                   \\
Unknown                   & 0.92    & 0.86                                                                   \\
Asian or Pac. Isl. & 0.89    & 0.85                                                                   \\
Other                     & 0.91    & 0.85                                                                   \\
Multi-Racial              & 0.92    & 0.85                                                                   \\
American Indian           & 0.92    & 0.84                                                                   \\ \hline
\end{tabular}
%\end{table}

&

%\begin{table}[htb]
%\caption{Geocoding hit rate in North Carolina}
%\label{tab:voters_geocoded_nc}
%\centering
\begin{tabular}{lcc}
\hline
                         Race & \begin{tabular}[c]{@{}c@{}}Geocoded \\ Voters\end{tabular} & \begin{tabular}[c]{@{}c@{}} Voters w/ geocoded \\ polling loc.s\end{tabular} \\ \hline
White & 0.91 & 0.78 \\
Black / Afr. Amer. & 0.91 & 0.77\\
Undesignated & 0.87 &  0.77\\
Other & 0.90 & 0.78\\
Asian & 0.85 & 0.81\\
Amer. Ind. / AK Nat. & 0.86 & 0.71\\
Two or more races & 0.89 & 0.76\\
Nat. Haw. / Pac. Isl. & 0.87 & 0.79\\ \hline
\end{tabular}
%\end{table}

    \end{tabular}
    \caption{Geocoding success rates for Florida (left) and North Carlolina (right) broken down by race.  Tables show fraction of voters with addresses correctly geocoded as well as fraction of voters whose polling location is correctly geocoded.}
    \label{fig:geocode_hitrate}
\end{figure}

\begin{figure}[htbp]
    \centering
    \begin{tabular}{cc}
    \textbf{Voter distribution by race in Florida} &
    \textbf{Voter distribution by race in North Carolina}\\

%\begin{table}[htb]
%\caption{voter distribution by race in Florida}
%\label{tab:available_fl}
%\centering
\begin{tabular}{lcc}
\hline
                        Race  & \begin{tabular}[c]{@{}c@{}}Original Data \end{tabular} &
                        \begin{tabular}[c]{@{}c@{}}Analysis Data \end{tabular}
                        \\ \hline
White, not Hisp.       & 0.616   & 0.605                                                                \\
Hispanic                  & 0.171   & 0.174                                                                \\
Black, not Hisp.       & 0.135   & 0.141                                                                \\
Unknown                   & 0.029   & 0.030                                                                \\
Asian or Pac. Isl. & 0.020   & 0.019                                                                \\
Other                     & 0.017   & 0.017                                                                \\
Multi-Racial              & 0.006   & 0.006                                                                \\
American Indian           & 0.003   & 0.003                                                                \\ \hline
\end{tabular}
%\end{table}

&

%\begin{table}[htb]
%\centering
%\caption{Voter distribution by race in North Carolina}
%\label{tab:available_nc}
\begin{tabular}{lcc}
\hline
 Race & \multicolumn{1}{c}{\begin{tabular}[c]{@{}c@{}}Original Data \end{tabular}} & \multicolumn{1}{c}{\begin{tabular}[c]{@{}c@{}} Analysis Data\end{tabular}} \\ \hline
White & 0.637 & 0.639 \\
Black / Afr. Amer. & 0.211 & 0.209\\
Undesignated & 0.097 &  0.097\\
Other & 0.027 & 0.027\\
Asian & 0.013 & 0.013\\
Amer. Ind. / AK Nat. & 0.007 & 0.006\\
Two or more races & 0.006 & 0.006\\
Nat. Haw.  / Pac. Isl. & 0.00004 & 0.00005\\ \hline
\end{tabular}
\end{tabular}
\caption{The voter distributions by race for Florida (left) and North Carolina (right) for both the original full voting rolls data and the analysis data after voters are removed to due unsuccessful geocoding.}
\label{fig:voter_dist}
\end{figure}

%%% Local Variables:
%%% mode: latex
%%% TeX-master: "main"
%%% End:

\section{Voting Access Disparities in Florida and North Carolina}
\label{sec:polling-stats}

In this section, we look into the accessibility of designated polling sites on 2020 Election Day to different race categories  defined in the Florida and North Carolina voter registration data. In our analysis we consider two measures of access to polling places:
\begin{description}
    \item[Distance:] we use the great-circle  distance, which is the shortest distance between two points on the surface of a sphere measured along its surface\footnote{\url{https://en.wikipedia.org/wiki/Great-circle_distance}}, as a proxy for the travel distance/time it takes for voters to get to their assigned polling site.
    \item[Polling site load:] we estimate the wait times at each polling place via the number of voters assigned to it. Our underlying assumption for this measure is that the number of votes that can be processed across all polling sites in a given time interval is the same.
\end{description}

These measures may not be perfect proxies for the \emph{time} it costs voters to participate in elections. For example, the geographical distance does not take into account the varying characteristics of travel routes in different areas, e.g. road traffic in rural vs urban regions. The disparate effect of photographic voter identification (ID) requirements on different demographics is disregarded by the polling site load measure \cite{stein2020waiting}. However, these measures provide us with an attainable yet useful enough tool to analyze and compare voters' access.

\subsection{Results: distances and polling site loads}
The results in Figure \ref{fig:raw_analysis} give the distance and polling site load for voters, broken down by race, in Florida and North Carolina.  The polling site load is measured based on the number of voters assigned to a voter's designated polling location.  That is, to determine the mean value for the number of voters per location for a specific race, for each of its members the number of voters at their respective polling site is included in the summation.

We observe that in Florida, White voters travel the longest distance to polling locations among the groups with respect to both mean and median values. Asian or Pacific Islander voters have the largest number of voters at their respective polling places. In North Carolina the American Indian voters experience the longest trip to polling locations while Asian voters encounter the most crowded polling locations. 
Prior studies have shown that variations in distance to polling sites, even as small as those observed in Figure \ref{fig:raw_analysis}, can have a significant effect on voter turnout \cite{haspel2005location}.

\begin{figure}[htbp]
    \centering
    \begin{tabular}{cc}
    \textbf{Florida} &
    \textbf{North Carolina}\\

%\begin{table}[htb]
%\caption{Florida: Average, Median, and Maximum distance to/load at the polling %location for voters categorized by race}
%\label{tab:fl_numbers}
\centering

\begin{tabular}{@{}lccccc@{}}
\toprule
\multicolumn{1}{c}{\multirow{3}{*}{Race}} & \multicolumn{3}{c}{Distance (miles)} & \multicolumn{2}{c}{\begin{tabular}[c]{@{}c@{}}Polling Load\\ (\# Voters)\end{tabular}} \\ \cmidrule(l){2-4} \cmidrule(l){5-6}
\multicolumn{1}{c}{}                      & Mean      & Med.      & Max        & Mean                          & Med.                           \\
\midrule
Wht. not Hisp. & 1.26 & 0.92 & 35.32 & 4805 & 4272 \\
Hispanic            & 0.89 & 0.68 & 28.51 & 4374 & 3824 \\
Blk. not Hisp. & 0.87 & 0.66 & 35.32 & 4160 & 3604 \\
Unknown             & 0.97 & 0.73 & 35.32 & 4460 & 3905 \\
Asn./Pac. Isl.  & 1.05 & 0.85  & 17.36 & 4903 & 4395 \\
Other               & 1.04 & 0.79 & 20.04 & 4758 & 4237 \\
Multi-Racial        & 1.02 & 0.78 & 16.11 & 4729 & 4216 \\
Amer. Ind.     & 1.24 & 0.86 & 22.96 & 4656 & 4079 \\ 
\midrule
%\multicolumn{1}{c}{}
Overall
& 1.12      & 0.82        & 35.26      & 4628                          & 4079                        %   & 15393                         
\\ 
\bottomrule
\end{tabular}
%\end{table}

&
%\begin{table}[htb]
%\caption{North Carolina: Average, Median, and Maximum distance to/load at the %polling location for voters categorized by race}
%\label{tab:nc_numbers}
%\centering
\begin{tabular}{@{}lccccc@{}}
\toprule
\multicolumn{1}{c}{\multirow{3}{*}{Race}} & \multicolumn{3}{c}{Distance (miles)} & \multicolumn{2}{c}{\begin{tabular}[c]{@{}c@{}}Polling Load\\ (\# Voters)\end{tabular}} \\ \cmidrule(l){2-4} \cmidrule(l){5-6}
\multicolumn{1}{c}{}    & Mean      & Med.      & Max        & Mean   & Med. \\
\midrule
White    & 1.69      & 1.27     & 40.67      & 3503      & 3105  \\
Black/Afr. Amer.   & 1.32      & 0.92        & 21.55      & 3581   & 3106 \\
Undesignated    & 1.51      & 1.10        & 25.68      & 3750   & 3302 \\
Other   & 1.40      & 1.03        & 16.87      & 3922   & 3418    \\
Asian   & 1.23      & 0.95        & 18.18      & 4204   & 3740     \\
Amer. Ind./AK Nat.    & 1.97      & 1.54        & 15.99      & 2751   & 2397  \\
Two or more races   & 1.32      & 0.94        & 16.06      & 3802   & 3327 \\
Nat. Haw./Pac. Isl.    & 1.65      & 1.16        & 9.37       & 3922   & 3326    \\ 
\midrule
%\multicolumn{1}{c}{} 
Overall
& 1.58      & 1.16        & 40.62      & 3561                  & 3127 \\   %& 14640   \\ 
\bottomrule
\end{tabular}
\end{tabular}
\caption{Florida (left) and North Carolina (right) distance and polling site load values, broken down by race.  Distance is measured based on the mean, median, and maximum distance (in miles) for voters to their closest polling location.  Polling side load gives the mean and median values for number of voters assigned to each racial group's polling locations.  Maximum values are the same across all groups; the maximum size polling location for Florida is 18369 and for North Carolina is 13669.  The (non-weighted) numbers across all voters is given for all measures in the last row.}
\label{fig:raw_analysis}
\end{figure}

\subsection{Assessing disparities for those with the least access}
Next, we assess whether there are disparities among those with the furthest distance or largest polling sites.  These are the voters who, based on these measures, have the least access to their polling sites, so racial disparities among these voters is especially important to assess.  We do this by looking at the racial distributions and median distances and polling loads for the 10\% of voters with the least access, where access is defined as either distance or load. 

  To calculate these values, we first take the top 10\% of voters by distance (load) in the overall distribution. We then divide this sub-population by race and calculate the median distance (load) for each race group separately.  We also report the the proportion of voters in each race group (labeled ``Dist." for distribution) in the top 10\%.  The results are shown in Figure \ref{fig:worst10}. As expected, the median distance (given in the ``Miles" column) and median number of voters at their polling locations increase substantially for these voters relative to the overall state medians (Figure \ref{fig:raw_analysis}).  When comparing the racial distributions for the analysis data overall (Figure \ref{fig:voter_dist}) to the distribution by race of the voters with least access, we see that white voters make up a larger share of the top 10\% by distance than they do the overall population. The racial distribution of the top 10\% by polling site load is largely similar to that overall. While the top 10\% do exhibit a different racial composition than the overall population, the degree of difference does not suggest massive outlier effects that would necessitate more in depth analysis of this group separately.

  %\sfnote{do we want a quick section conclusion here?  I.e., we thus decide we don't need to worry about these worst cases anymore because...?}

\begin{figure}[htbp]
    \centering
    \begin{tabular}{cc}
    \textbf{Florida, worst 10\%} &
    \textbf{North Carolina, worst 10\%}\\

%\begin{table}[htb]
%\caption{Florida: Top 10 percent access breakdown}
%\label{tab:fl_top_ten}
%\centering

\begin{tabular}{@{}lcccc@{}}
\toprule
\multicolumn{1}{c}{\multirow{3}{*}{Race}} & \multicolumn{2}{c}{\begin{tabular}[c]{@{}c@{}} Distance\end{tabular}} & \multicolumn{2}{c}{\begin{tabular}[c]{@{}c@{}} Polling Load\end{tabular}}\\ 
\cmidrule(l){2-3}\cmidrule(l){4-5} 
\multicolumn{1}{c}{} & Dist. & Miles & Dist. & \#Voters\\
\cmidrule(l){1-5}
Wht. not Hisp. & 0.77 & 3.16 & 0.69 & 9637\\
Hispanic & 0.09 & 2.90 & 0.14 & 9637\\
Blk. not Hisp.  & 0.07 & 2.98 & 0.09 & 9923\\
Unknown & 0.02 & 2.97 & 0.02 & 9756\\
Asn./Pac. Isl. & 0.01 & 2.84 & 0.02 & 9930\\
Other & 0.01 & 3.00 & 0.01 & 9938\\
Multi-Racial & 0.004 & 2.98 & 0.006 & 9860\\
Amer. Ind. & 0.004 & 3.24 & 0.003 & 9756\\ \bottomrule
\end{tabular}

&

%\begin{table}[htb]
%\caption{North Carolina: Top 10 percent access breakdown}
%\label{tab:nc_top_ten}
%\centering
\begin{tabular}{@{}lccccc@{}}
\toprule
\multicolumn{1}{c}{\multirow{3}{*}{Race}} & \multicolumn{2}{c}{\begin{tabular}[c]{@{}c@{}} Distance\end{tabular}} & \multicolumn{2}{c}{\begin{tabular}[c]{@{}c@{}} Polling Load\end{tabular}}\\ 
\cmidrule(l){2-3}\cmidrule(l){4-5} 
\multicolumn{1}{c}{} & Dist. & Miles & Dist. & \#Voters\\
\cmidrule(l){1-5}
White & 0.76 & 4.31 & 0.58 & 7416\\
Black/Afr. Amer. & 0.14 & 4.38 & 0.23 & 7199\\
Undesignated & 0.08 & 4.29 & 0.11 & 7268\\
Other & 0.02 & 4.28 & 0.03 & 7532\\
Asian & 0.005 & 4.17 & 0.02 & 7169\\
Amer. Ind./AK Native & 0.01 & 4.25 & 0.002 & 7532\\
Two or more races & 0.003 & 4.25 & 0.007 & 7416\\
Nat. Haw. or Pac. Isl. & 0.00006 & 4.64 & 0.00004 & 9850\\ \bottomrule
\end{tabular}
\end{tabular}
\caption{The distance and polling site loads for the 10\% of voters with the worst access based on these measures in Florida and North Carolina.  The distribution of voters by race (left) and median measured values for these worst-off voters (right) are given.
}
\label{fig:worst10}
\end{figure}

\subsection{Normalized distances and polling site loads}
Gimpel \etal demonstrated that the burden of distance on voter turnout is experienced differently in different regions \cite{gimpel2003political}. In other words, the same distance might be perceived differently in rural as opposed to urban areas, due to factors such as differences in road traffic, differences in access to personal vehicles, or the norm for everyday travel distances. In order to determine whether the disparities we observed based on the measured distances to polling locations (in miles) are burdensome or otherwise unusual in voters' lives, we next explore two different methods for \emph{normalizing} these distances.  

\textbf{Distance Normalization: Schools and libraries.} In the first, we consider the distance to a polling location relative to other distances that voters might travel regularly and which might be considered reasonable.  For this analysis, we choose to use schools and libraries as reference locations, with the motivation that these sites are distributed so that all residents have access to a local school and library.  An important caveat is that this choice may hide voting access disparities that accrue to the same voters as school and library access disparities.  However, our additional analyses in Sections \ref{sec:sol} and \ref{sec:experiments} will consider the possibility of opening new polling locations, and since schools and libraries also serve as good polling locations, they provide for a useful analysis reference. 

\textbf{Distance Normalization: Median pairwise distance to other voters.}  Gimpel \etal showed that residential density acts as a barrier to voter turnout \cite{gimpel2003political}. In order to normalize based on residential density, we estimate how densely populated each voter's residence area is based on the median of their pairwise distances to other voters in their precinct as the baseline. For each voter, we find the median of their distances to all the other voters within the same precinct and divide the voter's absolute distance to their polling site by that median. 

\textbf{Load Normalization.} In order to normalize the site loads to easily digestible numbers, we divided each group's median site load by that of the majority group, i.e. White voters in both states, as baseline.

In Figure~\ref{fig:normalized_access} the normalized distances to voting locations and loads are presented for Florida and North Carolina voters. We should note that although called distances, the values in these tables are not actual distance but ratios. Therefore, they cannot be compared to the values in Figure~\ref{fig:raw_analysis} directly. Instead, our goal here is to bring into attention how considering normalized distances changes the ordering among different races in terms of their distance to polling locations. In the state of North Carolina for example, on average American Indian voters experience the largest distance to polling locations while Black voters have the shortest travel distance when absolute distances are considered (Figure~\ref{fig:raw_analysis}). However, when normalized distances are considered as presented in Figure~\ref{fig:normalized_access}, the conclusions change. When normalizing to the nearest school or library, Black voters face the longest average travel distance to their polling site and white voters the shortest in both Florida and North Carolina. In Florida, this pattern is the same when considering the median of school/library normalized distance. In North Carolina, this pattern changes when considering the median. Normalizing to voter density tells a slightly different story, particularly in North Carolina, where Asian voters experience the longest mean density-normalized travel distance and American Indian and Alaskan Native the shortest. This highlights the importance of careful consideration of how distance to polling location is measured. Different choices among reasonable alternatives can lead to different conclusions about which group is most privileged with respect to polling access. Ultimately, the numbers themselves cannot tell us which normalization scheme (if any) is most appropriate and we must rely on common sense and contextual understanding to make a subjective determination of which metric is most appropriate in this case. 

Though imperfect, we think the median of nearest school or library-normalized distance is the best metric for assessing access disparities based on distance to polling location. As discussed above, we believe this is a reasonable proxy for the distance a person regularly travels for everyday tasks. We prefer it to the median pairwise distance normalization because this method for normalization may behave unintuitively for residents of multi-family housing. For example, residents of a large apartment building would likely have very small median distance to other voters in their precinct, which may not reflect the distance they regularly travel. Going forward, we present our results in terms of the school/library-normalized distance.

\begin{figure}[htbp]
    \centering
    \begin{tabular}{cc}
    \textbf{Florida, normalized access} &
    \textbf{North Carolina, normalized access}\\

%\begin{table}[htb]
%\caption{Normalized distances using two baselines measures: 1. Nearest school or library for each voter. 2. Median pairwise distance to other voters within the same precinct., are presented for Florida voters categorized by race.)}
%\label{tab:fl_normalized_numbers}
\centering
\begin{tabular}{@{}lccccc@{}}
\toprule
\multicolumn{1}{c}{\multirow{3}{*}{Race}} &
\multicolumn{4}{c}{Normalized Distance} &
Norm.\\
&
\multicolumn{2}{c}{school/lib.} &
\multicolumn{2}{c}{voter dist.} & Load \\
\cmidrule(l){2-3}\cmidrule(l){4-5} 
\multicolumn{1}{c}{} & Mean & Med. & Mean & Med. &\\
\cmidrule(l){1-6}
Wht. not Hisp. & 1.65 & 1.09 & 1.19 & 1.02 & 1\\
Hispanic & 1.86 & 1.26 & 1.35 & 1.10 & 0.89\\
Blk. not Hisp.  & 1.98 & 1.24 & 1.20 & 1.02 & 0.84\\
Unknown & 1.84 & 1.21 & 1.29 & 1.06 & 0.91\\
Asn./Pac. Isl. & 1.90 & 1.22 & 1.26 & 1.07 & 1.02\\
Other & 1.84 & 1.18 & 1.26 & 1.05 & 0.99\\
Multi-Racial & 2.02 & 1.22 & 1.26 & 1.05 & 0.98\\
Amer. Ind. & 1.76 & 1.15 & 1.15 & 1.01 & 0.95\\ \bottomrule
\end{tabular}
%\end{table}

&

%\begin{table}[htb]
%\caption{Normalized distances using two baselines measures: 1. Nearest school or library for each voter. 2. Median pairwise distance to other voters within the same precinct., are presented for North Carolina voters categorized by race.)}
%\label{tab:nc_normalized_numbers}
\centering
\begin{tabular}{@{}lccccc@{}}
\toprule
\multicolumn{1}{c}{\multirow{3}{*}{Race}} &
\multicolumn{4}{c}{Normalized Distance} &
Norm.\\
&
\multicolumn{2}{c}{school/lib.} &
\multicolumn{2}{c}{voter dist.} & Load \\
\cmidrule(l){2-3}\cmidrule(l){4-5}
\multicolumn{1}{c}{} & Mean & Med. & Mean & Med. &\\
\cmidrule(l){1-6}
White & 1.69 & 1.01 & 1.08 & 0.98 & 1\\
Black/AfAm. & 1.95 & 1.05 & 1.09 & 0.98 & 1.00\\
Undesignated & 1.83 & 1.04 & 1.11 & 0.99 & 1.06\\
Other & 1.83 & 1.06 & 1.14 & 1.02 & 1.10\\
Asian & 1.82 & 1.07 & 1.21 & 1.06 & 1.20\\
AmerInd./AKNat. & 1.91 & 1.00 & 0.92 & 0.90 & 0.77\\
Two+ races & 1.90 & 1.05 & 1.13 & 1.01 & 1.07\\
NatHaw./PacIsl. & 1.77 & 1.05 & 1.10 & 0.98 & 1.07\\ \bottomrule
\end{tabular}
\end{tabular}
\caption{Mean and median distance to polling location, normalized by distance to closest school/library or median distance to other voters in the precinct. Load is normalized to the median load experienced by the majority group.}
\label{fig:normalized_access}
\end{figure}

%%% Local Variables:
%%% mode: latex
%%% TeX-master: "main"
%%% End:

\section{Alternative selection for polling places: algorithms}
\label{sec:sol}

Having established a way to measure disparities in access to voting, can we determine where to place polling locations so as to reduce these disparities?  In this section, we propose algorithms to reduce access disparities in terms of both distance and load.  These methods are scalable and work on the large input sizes necessary to handle states such as Florida and North Carolina. We formulate this problem as different versions of the well-known $k$-median and $k$-center discrete clustering problems.  In both these problems, we are given a set of points $X$ (the voters) and facilities $F$ (polling locations) in a metric space and the goal is to choose $k$ facilities from $F$ and assign each point in $X$ to its nearest facility so that the overall cost (measured either as the sum of distances or the maximum distance) is minimized. In general, one assumes that $F = X$, but we will distinguish the two sets in our setting. 

We solve variants of the standard $k$-median and $k$-center algorithms with additional constraints that enforce the fair access criteria. Our approach here is an extension to the fair clustering literature which has gained momentum in the past few years \cite{makarychev2021approximation,chierichetti2017fair,kleindessner2019fair,kleindessner2019guarantees,chen2019proportionally,schmidt2018fair,ahmadian2019clustering,rosner2018privacy,vakilian2021improved,abraham2019fairness,negahbani2021better,Chlamatac2022Approximating}. To ensure the developed algorithms can be applied to large datasets, we employ the concept of \emph{coresets}. Given a problem (e.g. $k$-median), coresets are small, weighted subsets of large datasets such that the solutions to the problem found on subset are provably close to the solutions found on the original dataset \cite{agarwal2005geometric}. Our methods first summarize the massive voter dataset $X$ using coreset construction algorithms, and then feed the summarized versions into fair clustering algorithms to produce the desired result.

We introduce three solutions that provide more equitable alternatives to the original polling location assignment, with respect to the proposed access disparity measures. In the first solution, we  minimizing the maximum distance to polling places across different race categories while ignoring load balance. In the second solution we build upon the first method to find an assignment which also provides a more balanced distribution of voters, at the cost of needing to open up more polling locations. In our final solution, we discuss an alternative approach to addressing both distance and load issues simultaneously without violating the adding new facilities. All these solutions come with their own advantages and limits.

\subsection{Group fair distances}
\label{sec:distance_fair}

The problem of selecting a predetermined number of polling places and assigning exactly one to each voter, so that the overall distance between voters and polling places is minimized, can be formulated via the well-known problem of $k$-median clustering.
The objective in the $k$-median algorithm is to select $k$-centers (i.e polling locations) so as to minimize the sum of the distances between points and their associated center:
\[\argmin_{C}\sum_{p \in X}|| p - C(p) ||\]
where $C(p)$ denotes the center point $p$ is assigned to.

This formulation of the $k$-median problem can result in arbitrarily large distances for certain sub-populations within voters, as long as the overall average distance is minimized. This is a problem in the context of polling as it may  hinder certain individuals' ability to cast their vote, due to longer than necessary travel distances. We formulate this via a variant of the \emph{fair $k$-median} clustering problem introduced in \cite{abbasi2021fair} and introduce an algorithm to address this issue.

\begin{defn}[Fair $k$-median clustering]
\label{def:fairkm}
Given $m$ groups $X = X_1 \cup \dots \cup X_m$, a fair $k$-median clustering algorithm returns $k$ centers so as to minimize the maximum average distance to centers across all groups.
\[\argmin_{C \in \mathcal{C}}~ \text{max} \left( \frac{1}{|X_1|}\cost_C(X_1), \dots, \frac{1}{|X_m|}\cost_C(X_m) \right) \]
where $\mathcal{C}$ is the set of all possible choices of cluster centers, and $\cost_C(X_i)$ denotes the sum of distances for group $X_i$ in clustering $C$.
\end{defn}

In the context of polling and the concern around racial disparities, each group $X_i$ in the above definition represents voters of a particular race, which is determined according to the race categories in the voter rolls.
%$\mathcal{C}$ denotes the set containing all possible combinations of polling locations of size $K$, where $K$ is the number of originally designated polling locations on 2020 Election day for each state. 
%In other words a clustering $C \in \mathcal{C}$ is a set of $K$ centers each representing a polling location. Depending on the problem setting these centers may either be selected from schools/libraries plus original polling locations, or the entire set of voter addresses. 
%Let's assume $X = \cup_i X_i$ is the set of voters and $F$ is the set of all possible polling locations to choose from. 
To solve the fair $k$-median clustering problem Abbasi \etal first solve an associated linear program 
%\begin{align}
% & \min  ~\lambda \quad  \text{ subject to} \notag\\  
%& \sum_{v\in F} z_{uv} = 1 \quad \text{for all $u \in X$}, \notag \\
%& z_{uv} \le y_v \quad \text{for all $u \in X,v \in F$}, \notag \\
%&  \sum_{v \in F} y_v \le k, \notag\\
%& 0 \le z_{uv},  ~y_v \le 1 \quad \text{for all $u \in X,v \in F$}.\notag \\
%&  \frac{1}{|X_i|} \cdot \sum_{u \in X_i} \sum_{v \in F} d(u, v) z_{uv} \le \lambda \quad %\text{for all groups $i$}, 
%\label{eq:class-bound}
%\end{align}
and use a faithful rounding procedure to choose exactly $k$ centers. When voters' addresses can be selected as centers, this method results in expected approximation factor of 4 \cite{abbasi2021fair}.
Unfortunately, directly using this linear programming-based method is not realistic because it would be prohibitively slow or even impossible when determining polling locations for millions of voters.  Therefore, instead of processing the entire input data we propose to use its \emph{coreset}\cite{agarwal2005geometric}. Denote the average cost of standard and fair $k$-median clustering on point set $X$ as $\cost_C(X)$ and $\cost^f_C(X)$ respectively, where $C$ is a set of $k$ centers. 
\begin{defn} [$(k, \epsilon)$-coreset for $k$-median] Given a point set $X$ in a metric space, the $(k, \epsilon)$-coreset for $k$-median is a weighted subset $S$ of $X$, where for each $C$ with size $k$:
\[(1-\epsilon)\cost_C(X) \le \cost_C(S) \le (1+\epsilon)\cost_C(X)\]
\end{defn}

\begin{theorem}
\label{th:coreset}
Given point set $X = X_1 \cup ... \cup X_m$ as input, let $S_i$ be a $(k, \epsilon)$-coreset for group $X_i, i \in 1,...,m$, separately. Then for all $C$:
\[(1-\epsilon)\cost^f_C(X) \le \cost^f_C(S) \le (1+\epsilon)\cost^f_C(X)\]
where $S = S_1 \cup ... \cup S_m$.
\end{theorem}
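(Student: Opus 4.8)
The plan is to reduce the statement to the per-group coreset guarantee plus two elementary properties of the $\max$ operator, so the argument is short. First I would unfold the definitions. Since the groups $X_1,\dots,X_m$ partition $X$ and each $S_i$ is constructed as a coreset of $X_i$ alone, the coreset points inherit their group labels and $S=S_1\cup\dots\cup S_m$ is a disjoint union in which $S_i$ summarizes exactly $X_i$. Writing $a_i(C)$ for the (normalized) cost of group $X_i$ under a center set $C$ and $\tilde a_i(C)$ for the corresponding weighted cost over $S_i$, Definition~\ref{def:fairkm} gives $\cost^f_C(X)=\max_{1\le i\le m} a_i(C)$, and the same bookkeeping applied to $S$ gives $\cost^f_C(S)=\max_{1\le i\le m}\tilde a_i(C)$.

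Next, fix an arbitrary center set $C$ of size $k$. Applying the defining inequality of a $(k,\epsilon)$-coreset to each group $X_i$ (and dividing through by the common normalization constant) yields, for every $i$,
\[(1-\epsilon)\,a_i(C)\ \le\ \tilde a_i(C)\ \le\ (1+\epsilon)\,a_i(C).\]
I then use two elementary facts about finite maxima: (i) if $x_i\le y_i$ for all $i$ then $\max_i x_i\le\max_i y_i$, and (ii) a nonnegative scalar factors out of a maximum. Applying (i) to the left inequality above with $x_i=(1-\epsilon)a_i(C)$, $y_i=\tilde a_i(C)$, then (ii), and similarly to the right inequality with $x_i=\tilde a_i(C)$, $y_i=(1+\epsilon)a_i(C)$, yields
\[(1-\epsilon)\max_i a_i(C)\ \le\ \max_i\tilde a_i(C)\ \le\ (1+\epsilon)\max_i a_i(C),\]
that is, $(1-\epsilon)\cost^f_C(X)\le\cost^f_C(S)\le(1+\epsilon)\cost^f_C(X)$. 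Since $C$ was arbitrary, the theorem follows.

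I do not expect a genuine obstacle here; the only thing requiring care is the first step --- verifying that the fair objective evaluated on $S$ really does decompose group by group, which hinges on the groups being disjoint, on each $S_i$ being a coreset of $X_i$ and of nothing else, and on the per-group normalizations matching (equivalently, on the coreset preserving the total weight of each group). Once $\cost^f_C$ is exhibited as a maximum of per-group costs, each of which is $(1\pm\epsilon)$-approximated uniformly over all size-$k$ center sets, monotonicity of $\max$ does the rest. Note that the argument never uses the specific fair $k$-median LP or the particular coreset construction of~\cite{agarwal2005geometric}; it needs only the two-sided multiplicative guarantee holding simultaneously for every candidate $C$.
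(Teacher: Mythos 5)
Your proof is correct and follows essentially the same route as the paper's: express $\cost^f_C$ as a maximum of per-group costs, apply the per-group $(k,\epsilon)$-coreset guarantee, and conclude by monotonicity of the maximum. Your added care about the per-group normalization constants (the $1/|X_i|$ factors and the total weight of each $S_i$) is a point the paper's own proof glosses over, but it does not change the argument.
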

\begin{proof}
We know by definition that $\cost^f_C(S) = \argmax_{i \in 1...m} \cost_C(S_i)$. 
Since $S_i$ is a $(k, \epsilon)$-coreset we have that  for $i \in 1...m$, $(1-\epsilon)\cost_C(X_i) \le \cost_C(S_i) \le (1+\epsilon)\cost_C(X_i), \forall C$. Therefore
\begin{align*}
(1 - \epsilon) \argmax_{i \in 1...m} \cost_C(X_i) & \le \argmax_{i \in 1...m} \cost_C(S_i) \le (1 + \epsilon) \argmax_{i \in 1...m} \cost_C(X_i)
\end{align*}

or:
\[(1-\epsilon)\cost^f_C(X) \le \cost^f_C(S) \le (1+\epsilon)\cost^f_C(X)\]
\end{proof}

Theorem~\ref{th:coreset} suggests that given a large point set as input one can run any fair $k$-median algorithm (as defined in Definition~\ref{def:fairkm}) on the union of its groups' coresets, and find a clustering with an objective value arbitrarily close (within $\epsilon$ factor) to that of running the same algorithm on the original set. 

For our coreset construction we use the algorithm proposed by Feldman and Langberg in \cite{feldman2011unified}.  In the first step, it  uses a bi-criteria $k$-median clustering algorithm as a subroutine to find initial centers, and assign weights to each point in the original set based on its distance to the closest center\footnote{An $(\alpha, \beta)$ bi-criteria $k$-median clustering algorithm opens up to $\beta \times k$ centers, which results in an objective cost smaller than $\alpha$ times the optimal solution.}. In our implementation for $(\alpha, \beta)$ bi-criteria $k$-median algorithm we use the method due to Indyk\cite{indyk1999sublinear}. In the second step the points are sampled according to the distribution implied by the assigned weights, and their union with bi-criteria centers are returned as a weighted coreset. If the original point set is defined in a metric space  (which is the case in our dataset), then with probability $1-\delta$ this algorithm returns a weighted $\epsilon$-coreset of size $\frac{c}{\epsilon^2}(k \log(n) + \log(1/\delta))$ where $n$ is the input size and $c$ is a large enough constant. In our implementation this algorithm is used to construct a coreset for every group and the union of these coresets is fed into the fair clustering algorithm to find a distance fair assignment. An illustration of this method is provided in Figure~\ref{fig:feldman}.

\begin{figure*}[htb]
\centering
\includegraphics[width=0.7\textwidth]{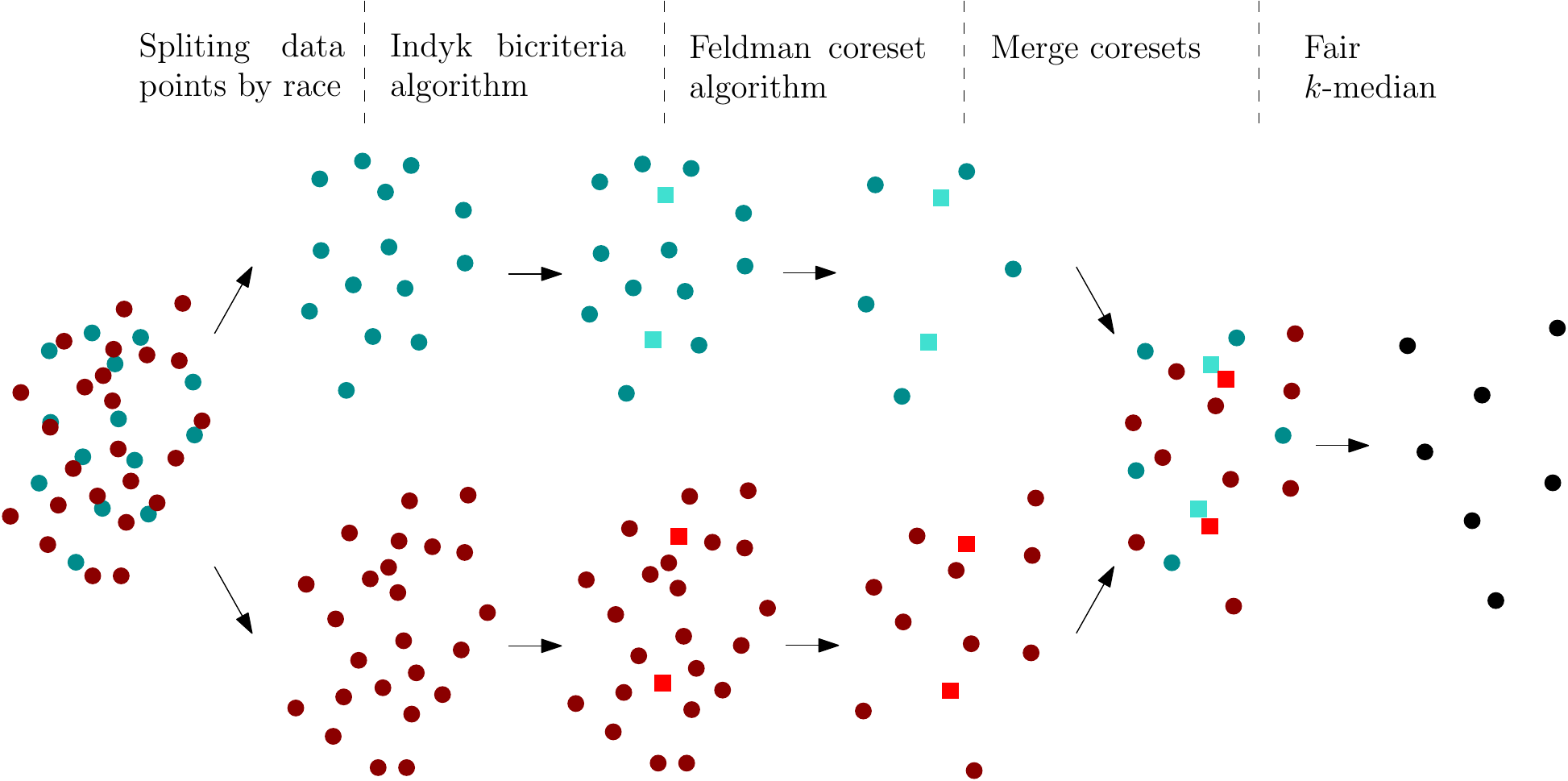}
\caption{Fair $k$-median clustering algorithm using coresets for two groups. Data points are first split based on their group. The coreset for each group is constructed as described and then the union of coresets is used as input for the fair $k$-median algorithm.}    
\label{fig:feldman}
\end{figure*}

In order to empirically evaluate the effectiveness of using coresets from Theorem~~\ref{th:coreset}, we compare the results of running regular and fair $k$-median clustering algorithms on a sample dataset, to the corresponding values achieved on its coreset. For this purpose we randomly selected 4000 and 1000 White and Black voters from North Carolina voter records, respectively. The results of regular and fair $k$-median algorithms on this sample and its coreset are summarized in Appendix Figure~\ref{fig:coreset}; the objective values achieved by running both algorithms on the coreset are very close to the values from the entire sample data, which corroborates the results of Theorem~\ref{th:coreset}. Both theoretical and empirical results demonstrate that we can use coresets in our analysis and achieve near optimal results.  We return to assessing this algorithm on our Florida and North Carolina data in Section \ref{sec:experiments}.

\subsection{Group fair distances and balanced assignments}
\label{sec:load_fair}

A fair $k$-median framing of the problem of access can address concerns around traveling to a polling station. But as we have pointed out earlier, another component of the overall time it takes to vote is the load at the polling station itself. 
%The analysis in section~\ref{sec:distance_fair} indicates the fair $k$-median algorithm is successful in reducing the overall average distance to polling places as well as achieving a more distance-fair assignment. But such assignment might still lead to unbalanced polling places with regard to the number of voters, where some locations are overloaded while others are under-utilized. 

%\sferror{move this to the experiments section}
%Let us consider the standard deviation of the number of voters per location as a measure of balanced assignment. In the original polling assignment the standard deviation of the number of voters per polling place were 2084 and 1577 for Florida and North Carolina, respectively. In the polling assignment produced by fair $k$-median algorithm these values were 1967 and 1733 for Florida and North Carolina, respectively. These results reveal that the fair $k$-median algorithm did not have a positive or negative effect on balanced voter distribution over the polling places. 

%Such unbalanced assignment may still be undesirable since voters who are assigned to over-loaded polling places are likely to have to wait longer at the facility to cast their vote. This issue can easily be resolved by opening up additional polling places. In this section we show, both theoretically and empirically, that not too many extra polling sites are required to be opened.

The problem of opening and assigning polling locations to voters so that the overall distance to polling places is minimized while maintaining a balanced distribution of voters per location, can be formulated via the \emph{balanced} $k$-median clustering problem\footnote{This problem is also known as capacitated $k$-median clustering problem} where each facility comes with a capacity constraint on the number of points that can be assigned to it. Previous algorithms for the balanced $k$-median either violate the capacity constraint or the cardinality (number of centers) constraint. In this section we opt for a method which may open more centers but maintains the capacity constraint.
%and in Section~\ref{sec:kcenter} we discuss a method which opens exactly $k$ centers but allows for violating the capacity constraint in a controlled manner. 
We start by assuming that all facilities have a uniform capacity $L$
%A few variants of the balanced $k$-median clustering problem have been the subject of previous works. In the most generalized version of the problem a capacity is associated with each facility. But since we do not make assumptions about the facilities in our setting and they are not provided as an input to the problem, we consider a uniform capacity for all facilities. 
We call the fair variant of this problem \emph{$L$-balanced fair $k$-median clustering}. Bateni \etal \cite{bateni2014distributed} demonstrated how an approximate solution to the $k$-median clustering problem can be transformed into an approximate bicriteria solution for the $L$-balanced $k$-median clustering with slightly worse approximation factors. We closely follow their approach and show this also holds true for the fair $k$-median variant of the problem in the following theorem. 
\begin{theorem}
\label{th:balanced}
Suppose there is an $\alpha$ approximation algorithm for the unconstrained (no balance constraint) fair $k$-median problem. Then there exists a $(2\alpha, 2)$ bicriteria approximation for $L$-balanced fair $k$-median problem.
\end{theorem}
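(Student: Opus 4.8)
The plan is to transfer Bateni et al.'s reduction from (unconstrained) $k$-median to $L$-balanced $k$-median, checking at each step that the manipulation is compatible with the max-over-groups fair objective. First I would run the hypothesised $\alpha$-approximation for the unconstrained fair $k$-median problem, obtaining a set $S$ of $k$ centers together with the nearest-center assignment; note that for a \emph{fixed} center set, assigning every point to its nearest center simultaneously minimises $\cost_C(X_i)$ for every group $X_i$, hence minimises the fair objective, so this is the cost we should work with. The key initial observation is that the optimal $L$-balanced fair solution is itself a feasible \emph{unconstrained} solution: dropping the capacity constraints and re-assigning every point to its nearest open center can only decrease each $\tfrac{1}{|X_i|}\cost_C(X_i)$, hence their maximum. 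Therefore the fair cost of $S$ is at most $\alpha$ times the optimal fair cost of the $L$-balanced instance.

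The second step turns $S$ (whose nearest-center assignment may overload some centers) into a capacity-respecting solution. For each center $c\in S$ whose assigned cluster $P_c$ has $\ell_c=|P_c|>L$, I would order $P_c=(p_1,p_2,\dots,p_{\ell_c})$ by nondecreasing distance to $c$, keep the $L$ closest points $p_1,\dots,p_L$ assigned to $c$, and for $j=1,\dots,\lceil \ell_c/L\rceil-1$ open a new center co-located with $p_{jL+1}$ and assign to it the block $p_{jL+1},\dots,p_{\min((j+1)L,\,\ell_c)}$. Every block has at most $L$ points, so the new solution respects capacity $L$. For a reassigned point $p_i$ with $jL<i\le(j+1)L$, the triangle inequality gives $\norm{p_i-p_{jL+1}}\le\norm{p_i-c}+\norm{c-p_{jL+1}}\le 2\norm{p_i-c}$, where the last step uses that $p_{jL+1}$ precedes $p_i$ in the sorted order and so is at least as close to $c$. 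Hence every point's assignment distance at most doubles, so $\cost_C(X_i)$ at most doubles for each group, and therefore so does the fair objective: the resulting solution has fair cost at most $2\alpha$ times the optimum of the $L$-balanced fair instance.

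Finally I would bound the number of opened centers: it is $\sum_{c\in S}\lceil \ell_c/L\rceil\le\sum_{c\in S}(\ell_c/L+1)=n/L+k$ with $n=|X|$, and since the $L$-balanced instance is feasible only if $kL\ge n$, this is at most $2k$. Combining the two bounds gives the claimed $(2\alpha,2)$-bicriteria approximation. I expect the only non-bookkeeping point to be verifying that the per-point doubling of step two propagates to the fair (max-of-averages) objective with no further loss; this is the place to be careful, but it is immediate because doubling every summand doubles each group average and hence their maximum. A secondary modelling caveat worth stating explicitly is that opening a new center ``at $p_{jL+1}$'' presumes the candidate facility set may be taken to contain the voter locations (or that facilities may be reopened), together with the standing feasibility assumption $kL\ge n$.
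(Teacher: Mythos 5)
Your proof is correct and follows essentially the same route as the paper's: run the unconstrained fair $k$-median algorithm, split each overloaded cluster into $\lceil N_u/L\rceil$ blocks with new centers chosen among the points closest to the old center, use the triangle inequality to bound the per-point distance blow-up by $2$, and count $\sum_u\lceil N_u/L\rceil\le n/L+k\le 2k$. You are in fact slightly more explicit than the paper on two points it leaves implicit --- that the balanced optimum is a feasible unconstrained solution (so the $\alpha$ bound is against the right benchmark) and that the $2k$ bound uses the feasibility condition $kL\ge n$ --- but these are clarifications, not a different argument.
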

\begin{proof}
The proof of this theorem is similar to that of Bateni \etal, with a slight variation to accommodate for the fair version of the $k$-median algorithm. Given an instance of $L$-balanced fair $k$-clustering, we first solve the unconstrained problem and obtain an $\alpha$ approximate solution. If each cluster in this solution has at most $L$ points we are done. Otherwise, consider a cluster with center $u$ that has $N_u$ points where $N_u > L$. This cluster is broken down into $\ceil*{\frac{N_u}{L}}$ clusters and the center for each new cluster will be its closest point to $u$. The points in the original cluster will be assigned to one of the new $\ceil*{\frac{N_u}{L}}$ centers as to satisfy the capacity constraint. Now consider one of the new clusters created in this way and assume its center is $u'$. Since for every point $v$ in the new cluster $d(u',u) \le d(v,u)$ then $d(v, u') \le 2d(v,u)$. On the other hand, we know the original solution is $\alpha$ approximate, i.e. the cost for the group with the highest average distance is $\alpha$ times the corresponding value in the optimal solution. Since the distance between all points and their respective centers in the updated solution has been at most doubled compared to the original solution, the updated solution is a $2\alpha$ approximate solution. 
The updated solution has $\sum_u\ceil*{\frac{N_u}{L}} < \sum_u \frac{N_u}{L} + 1$ clusters. Since the number of clusters in the original solution cannot be more than $k$, then at most $k$ new clusters have been opened which means the number of clusters in the updated solution is less than or equal to $2 k$.
\end{proof}
Theorem~\ref{th:balanced} gives us an $8 + \epsilon$ expected upper-bound to an instance of $L$-balanced fair $k$-median problem where at most $2k$ centers are opened.  We assess these results in practice on the Florida and North Carolina data in Section \ref{sec:exp_balfair_med}.

\subsection{Individually fair polling assignment}
\label{sec:kcenter}
The algorithms discussed in sections~\ref{sec:distance_fair} and \ref{sec:load_fair} both consider a group notion of fairness to deliver a polling place assignment with more equitable assignments by group. But we can also investigate the issue of inequitable distances to polling locations at an \emph{individual} level. That is, the objective would be to find a solution where no single voter, irrespective of others, is too far away from the nearest polling place. In this section we discuss a method to minimize the maximum distance a voter has to travel to cast their vote.

To address the issue described above we look at another well-known variant of the $k$-clustering problem called $k$-center. Given a set of data points and parameter $k$, the objective in the $k$-center problem is to select $k$ centers from the input and assign each data point to one of the selected centers, so the maximum distance between the points and their assigned centers is minimized, i.e. $\argmin_{C \in \mathcal{C}} \argmax_{p \in X} \|p-C(p)\|$
where $X$ is the set of data points and $\mathcal{C}$ is the set of all possible $k$-clusterings.

In our analysis we consider two variants of the $k$-center problem, without center load constraints and with. As before we assume a uniform load requirement for each center. Our solution is similar: we build a core-set and then use an unconstrained algorithm to cluster the coreset. To build the coreset we use the distributed weighted balanced $k$-center algorithm proposed in \cite{mirjalali2020improved} that splits the data into chunks and runs the well-known $2$-approximation due to Gonzalez\cite{gonzalez1985clustering} to generate a coreset. Once we have collected and weighted the coresets appropriately, we run a capacitated $k$-center algorithm that is a weighted variant of the algorithm proposed in \cite{khuller2000capacitated}. In their paper, Mirjalali \etal show that this process yields a $49$-approximation overall.

\section{Alternative selection for polling places: experiments}
\label{sec:experiments}

We next assess the effectiveness of the fair polling site selection methods introduced in Section \ref{sec:sol} to analyze the impact they could have on reducing voting disparities in Florida and North Carolina.
We evaluated these algorithms based on two options for new polling locations.
In the first setting, polling places can be opened at any voter residence. This setting is used to evaluate a best case scenario and is less realistic.\footnote{Although unusual, there are some jurisdictions that do allow residences (garages) to be designated polling locations, precisely due to restrictions on the number of people per precinct (which we term ``load"): \url{https://www.good.is/articles/polling-place-garages-san-francisco}}
In the second setting, the algorithm can only select the polling places from a set containing the state's schools, libraries, and 2020 election designated polling places.  This gives a realistic bound on improvement from better polling location placement, since schools and libraries are often used as polling locations.
The full set of numerical results for all algorithms, including both polling selection methods, normalized and real valued results for each racial group, are given in the Appendix. Here, we focus on the normalized median distance based on the polling selection method that allows selection from school, library, and existing polling site locations, and on the normalized site loads.  Recall from Section \ref{sec:polling-stats} that we see this type of normalized distance as the most useful measure of voter distance access assessed.  Interestingly, even though the algorithmic guarantees and optimizations introduced in Section \ref{sec:sol} are based on the un-normalized distances and polling site loads, we find that the fair algorithms perform well on the normalized measures.

\begin{figure}[htbp]
    \centering
    \begin{tabular}{cc}
    \includegraphics[width=2.9in]{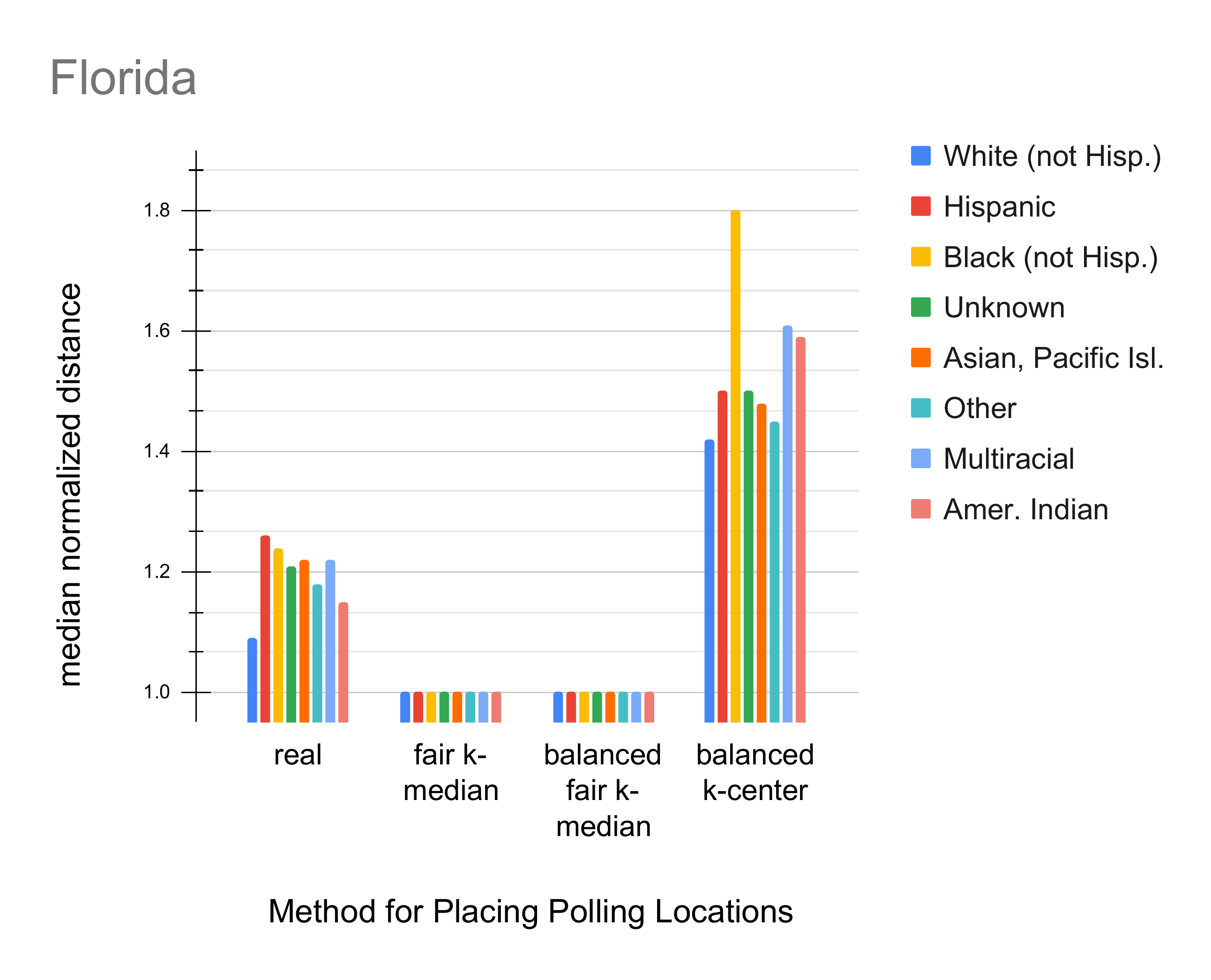} &
    \includegraphics[width=2.9in]{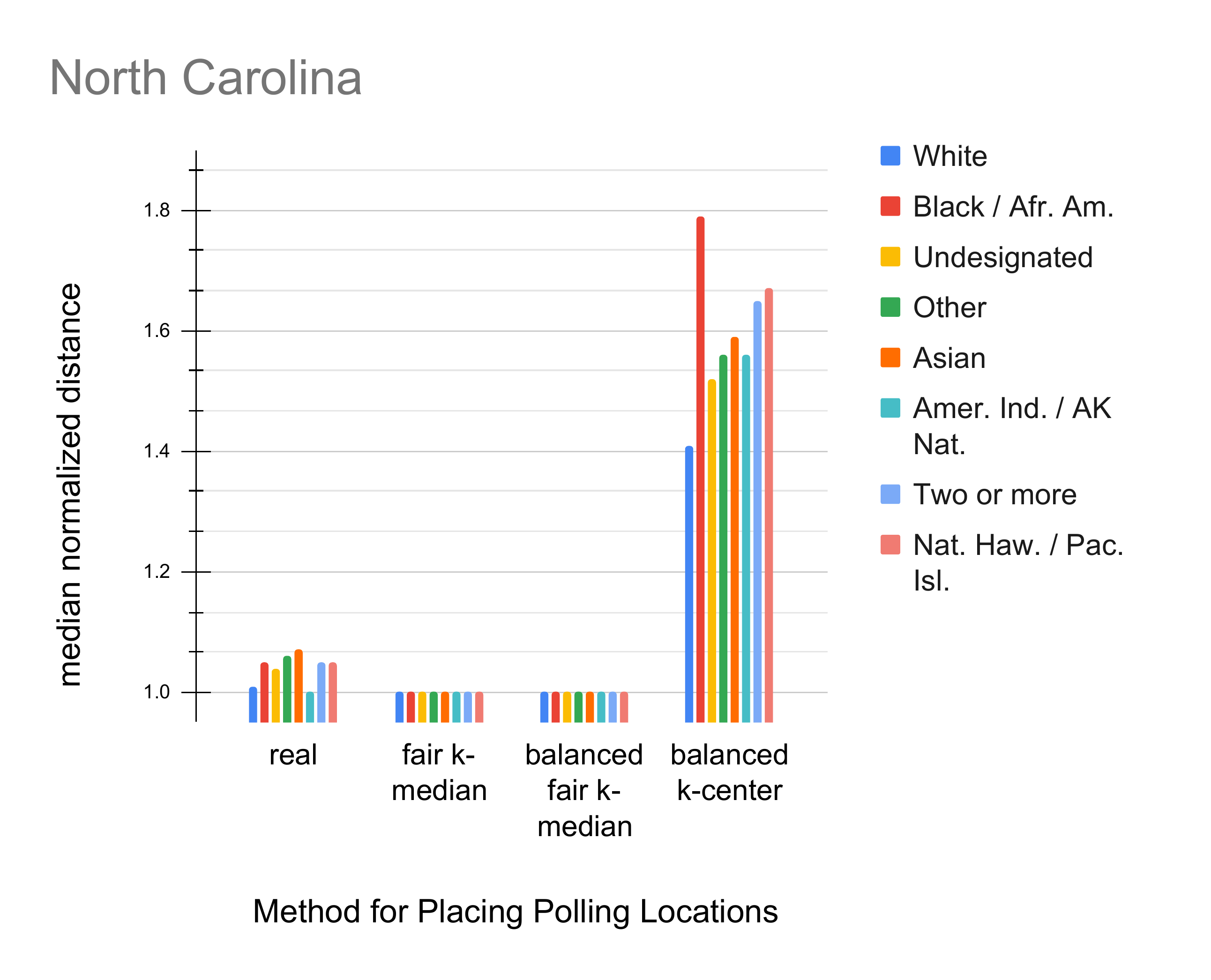}\\
    \includegraphics[width=2.9in]{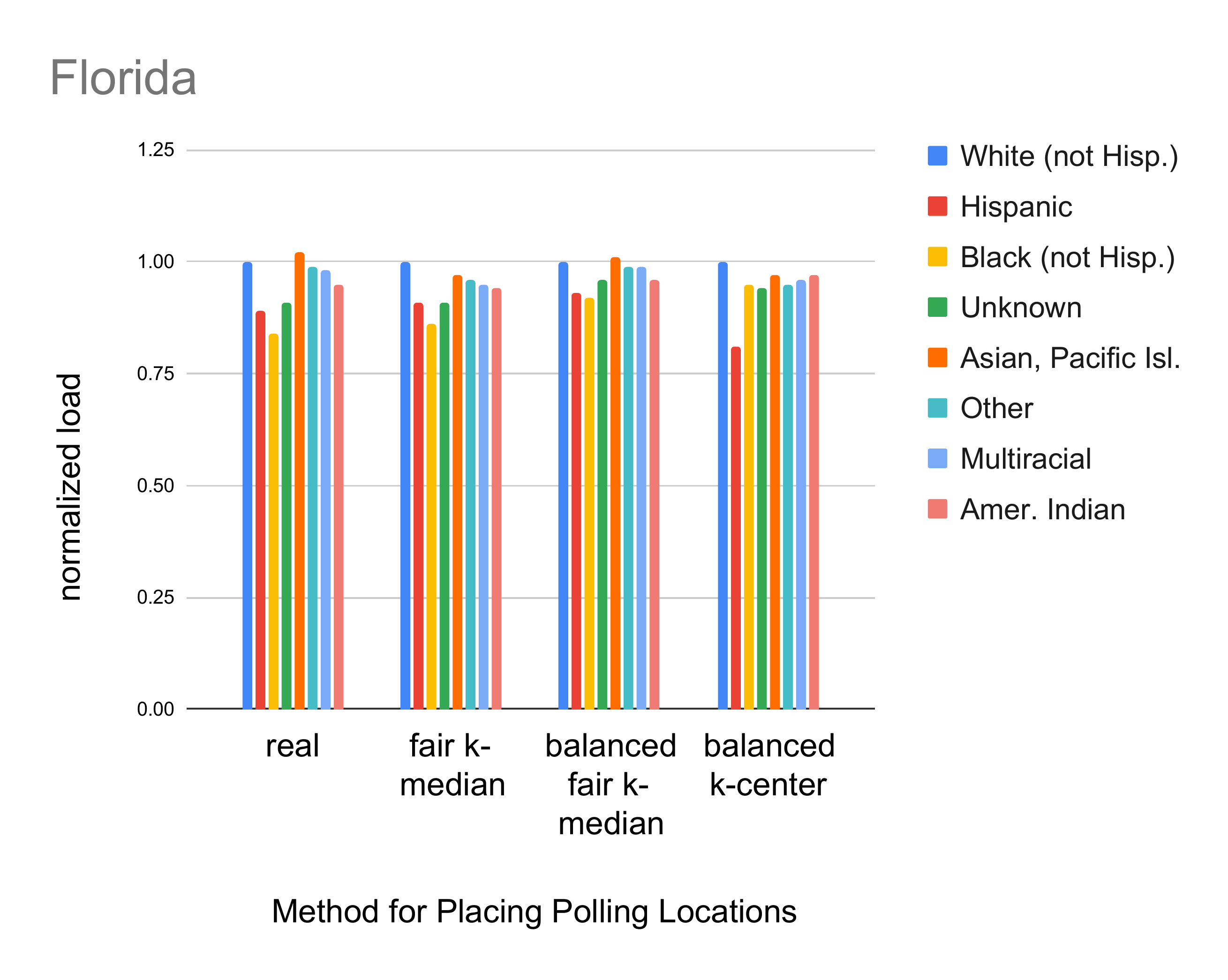} &
    \includegraphics[width=2.9in]{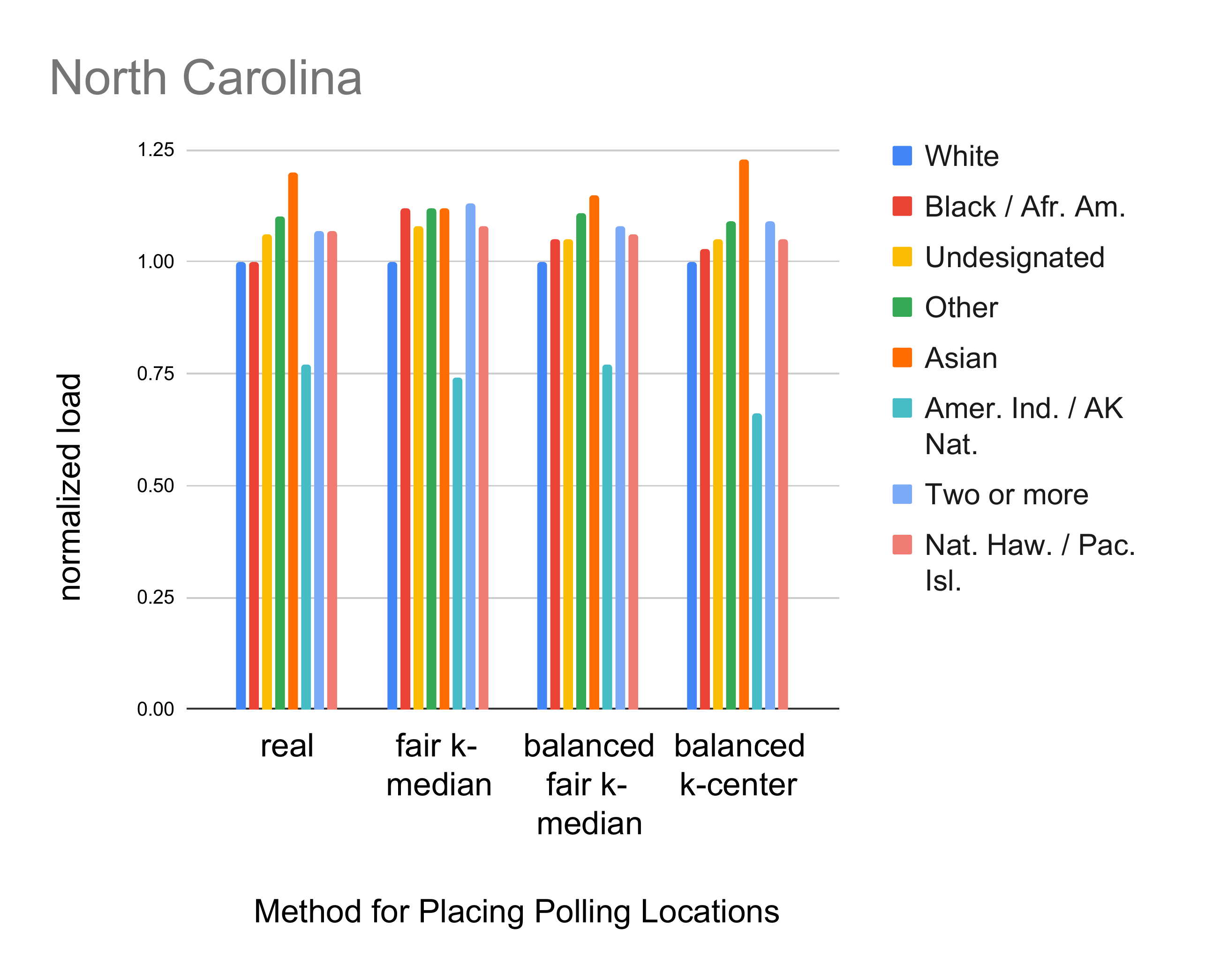}
    \end{tabular}
    \caption{Comparative results for the real polling locations (based on analysis in Section \ref{sec:polling-stats}) and fair polling location selection algorithms introduced in Section \ref{sec:sol}.  Results are given for Florida (left) and North Carolina (right) based on normalized median values for distances (top) and normalized polling load values (bottom). The polling selection algorithm results shown here allow polling sites to be opened at schools or libraries. Balanced fair $k$-median and balanced $k$-center results are for capacity parameter $\epsilon=0.5$.}
    \label{fig:alg_normalized_comparison}
\end{figure}

The results, shown in Figure \ref{fig:alg_normalized_comparison}, show that both of the fair $k$-median variants are able to achieve a median normalized distance of $1$ for all racial groups across both states.  This means that the median voter travels only as far as their closest school or library under these fair algorithm variants.  In North Carolina, the real polling location assignments were very close to also matching this result.  However, in Florida, the new polling locations from the fair $k$-median variants would decrease the distance-based access disparities between groups while also decreasing these distances for all groups.  Hispanic and Black voters, who experienced the largest normalized distance, had to travel about 25\% further than the closest school or library under the real polling location assignments and can have that reduced under the fair assignments.  These reductions can translate into substantial distances for voters.  For example, the difference between the minimum and maximum group median distances within Glades County in Florida is reduced from 6.26 miles in the original assignment to 1.21 miles in the assignment produced by the fair $k$-median algorithm.

The normalized load results demonstrate that the balanced fair $k$-median algorithm is the most effective at balancing the load across groups, although the other methods do somewhat help to achieve balance.
%
%Theorem~\ref{th:balanced} suggests that in order to transform a solution for the fair $k$-median problem to a valid solution for the $L$-balanced fair $k$-median problem, it is enough to break apart every cluster for which the capacity constraint is violated into several new clusters, based on its associated capacity and the number of points assigned to it. 
We considered three different capacities for both the balanced fair $k$-median and the balanced $k$-center algorithms. These capacities were $1 + \epsilon$ times the average number of voters per location, where $\epsilon \in \{0.1, 0.5, 0.9\}$ is the control variable, and average is determined by dividing the the total number of voters by the total number of designated polling places. The results in Figure \ref{fig:alg_normalized_comparison} are for $\epsilon=0.5$.  
%We observe when polling places could be selected from all data points there is no noticeable difference between the two variants of the fair $k$-median algorithm. But if the polling places should be selected from schools and libraries the distances for the balanced fair $k$-median are on average larger than the unconstrained fair $k$-median. The reason behind this is that when an oversized cluster is broken apart, if all data points (voters' residences) are viable options for polling locations then the new centers (representing new polling places) could be chosen close to the original cluster center due to population density. This is not necessarily true for schools and libraries which results in some voters being assigned to schools/libraries that might not be nearest to them. %
Recall that this capacity choice sets the allowed deviation from a fixed (balanced) load across polling sites.  Thus, the capacity choice impacts the number of extra polling places opened up by the balanced fair $k$-median algorithm, and voter distribution balance in the assignments produced by both balanced fair $k$-median and balanced $k$-center algorithms. The results are summarized in Tables~\ref{tab:both_balanced_epsilon} and \ref{tab:both_kcenter_epsilon}. As expected, we see by increasing the capacity fewer additional polling sites need to be opened for the balanced fair $k$-median algorithm. For both algorithms we see that increasing the capacity results in a larger standard deviation of the number of voters per location, which means voters are less evenly distributed across polling sites.

\begin{table}[htb]
\caption{The effect of capacity in $L$-balanced fair $k$-median algorithm on the number of newly opened polling sites and voter distribution}
\label{tab:both_balanced_epsilon}
\centering
\begin{tabular}{@{}lcccccc@{}}
\toprule
\multicolumn{1}{c}{\multirow{3}{*}{ }} & \multicolumn{3}{c}{\begin{tabular}[c]{@{}c@{}}Florida\end{tabular}} & \multicolumn{3}{c}{\begin{tabular}[c]{@{}c@{}}North Carolina\end{tabular}} \\ \cmidrule(l){2-4} \cmidrule(l){5-7}
\multicolumn{1}{c}{}    & $\epsilon=0.1$      & $\epsilon=0.5$      & $\epsilon=0.9$        & $\epsilon=0.1$   & $\epsilon=0.5$    & $\epsilon=0.9$   \\ \midrule
Number of extra polling sites & 1472 & 690 & 276 & 915 & 432 & 207\\
Mean \#voters per location & 2404 & 2854 & 3173 & 1871 & 2252 & 2487\\
Std. dev. of number of voters per location & 1118 & 1378 & 1596 & 860 & 1106 & 1317\\ \bottomrule
\end{tabular}
\end{table}

\begin{table}[htb]
\caption{The effect of capacity in balanced $k$-center algorithm on voter distribution}
\label{tab:both_kcenter_epsilon}
\centering
\begin{tabular}{@{}lcccccc@{}}
\toprule
\multicolumn{1}{c}{\multirow{3}{*}{ }} & \multicolumn{3}{c}{\begin{tabular}[c]{@{}c@{}}Florida \end{tabular}} & \multicolumn{3}{c}{\begin{tabular}[c]{@{}c@{}}North Carolina\end{tabular}} \\ \cmidrule(l){2-4} \cmidrule(l){5-7}
\multicolumn{1}{c}{}    & $\epsilon=0.1$      & $\epsilon=0.5$      & $\epsilon=0.9$        & $\epsilon=0.1$   & $\epsilon=0.5$    & $\epsilon=0.9$   \\ \midrule
Mean \#voters per location & 4628 & 4628 & 4628 & 3561 & 3561 & 3561\\
Std. dev. of number of voters per location & 1503 & 2331 & 2806 & 1255 & 1821 & 2150\\ \bottomrule
\end{tabular}
\end{table}

These methods have distinct strengths and weaknesses and are suitable for different use cases. While the fair $k$-median algorithm produces an assignment with more equitable distances at a group level as well as offering shorter distances overall, it does not take into consideration the potentially unbalanced loads at polling sites.  Still, in practice it performs well when assessing normalized load. The $L$-balanced fair $k$-median algorithm solves this issue by introducing a limit on the number of voters that can be assigned to a single facility. Although this method may need to open up additional polling places, this requirement could instead be used to allocate additional resources, e.g. placing more polling booths at certain polling sites. The unconstrained (Appendix~\ref{app:kcenter}) and balanced $k$-center algorithms introduced in Section \ref{sec:kcenter} address equitable access at an individual level. Similar to the $k$-median methods, the main difference between the two is that one distributes voters evenly across polling paces while the other does not provide such guarantee. The nice property of the balanced $k$-center algorithm is that it can be tuned to specify exactly how many voters above the polling place limit could be assigned to it. However, this comes at the expense of less competitive distances to polling sites.   

\section{Limitations and Conclusion}
\label{sec:conclusions}
In this paper, we analyzed voting access disparities with respect to polling locations.  We introduced a methodology to study these voting access disparities, focusing on quantifying potential racial disparities in terms of distance to nearest polling location and number of people assigned to a given polling location (its ``load"). To account for natural variations in population density that might give, e.g., individuals in rural locations the expectation and means of traveling further to the polls, we normalized the distance to nearest polling location by the distance to the nearest school or library as well as by another natural measure of population density (the median distance to another voter).  Examining these disparities for all voters in Florida and North Carolina in the 2020 US general election, we found that non-white voters had to travel farther to the polls (using the normalized distances) in Florida while in North Carolina the distances to the polls showed fewer racial disparities. The number of voters assigned to a polling location also varied by race, though it was less clear that difference was substantive. However, the specific measurement choices for assessing voter access are critical; different reasonable measures of distance to polling location return different voting access disparity results.

These voting access disparity results are subject to a number of limitations and should be seen as only the beginning of an investigation into voting access disparities.  The measurement methodology we introduce here is able to quantify (normalized) distance to the nearest polling location and the polling location load, but a voter's experience of access to their polling location depends on many additional factors.  These include factors directly related to the polling location (e.g., number of voting stations or machines, time to wait in line, accessibility of the location by public transit and for voters with disabilities) as well other societal limitations that may keep people from the polls (e.g., availability of childcare, time off of work to vote, voter intimidation).  While distance and load may be reasonable proxies for some of these measures, they do not capture the full set of barriers that may prevent someone from voting.  Thus, these measures are most useful as a beginning point by identifying racial disparities that should be addressed, and lack of identification of such disparities in these two measures should not be considered a sign that voting access has been equalized. 

Additionally, we introduced multiple algorithmic interventions to assign polling locations to reduce racial disparities in distance and load.  These new algorithms could allow elections officials to place polling locations more effectively based on a given list of public location (we use schools and libraries).  They also allow the study of potential gains and trade-offs associated with possible alternations such as reassigning voters and moving or adding polling sites, but they assume each voter can be assigned to any of the selected polling sites and do not take existing precincts into account.  Additionally, these algorithmic interventions focus only on polling locations and therefore ignore other interventions that may be helpful in reducing voter access disparities, such as arranging rides to the polls or giving workers time off to vote. Universal vote-by-mail could be even more effective at alleviating voting access disparities, as it entirely avoids the distance and load concerns we raise here.
Within the bounds of existing in-person voting, however, the introduced algorithms can serve as a useful first step in mitigating racial disparities in voting access.

% As with the disparity identification methodology, these disparity reduction algorithms are most useful as a first step in identifying and mitigating racial disparities in voting access.

%%% Local Variables:
%%% mode: latex
%%% TeX-master: "clustering"
%%% End:

%%
%% The acknowledgments section is defined using the "acks" environment
%% (and NOT an unnumbered section). This ensures the proper
%% identification of the section in the article metadata, and the
%% consistent spelling of the heading.

%%
%% The next two lines define the bibliography style to be used, and
%% the bibliography file.

\newpage
\bibliographystyle{ACM-Reference-Format}
\bibliography{refs}

%%
%% If your work has an appendix, this is the place to put it.
\appendix

\section{Coreset analysis}
A comparison between the results of running regular and fair $k$-median algorithms on a sample dataset and its coreset are presented in Figure~\ref{fig:coreset}. We observe the objective values achieved by running both algorithms on the coreset are very close to the values from the entire sampled data, which corroborates the results of Theorem~\ref{th:coreset}.

\begin{figure*}[htb]
\centering
\includegraphics[width=0.8\textwidth]{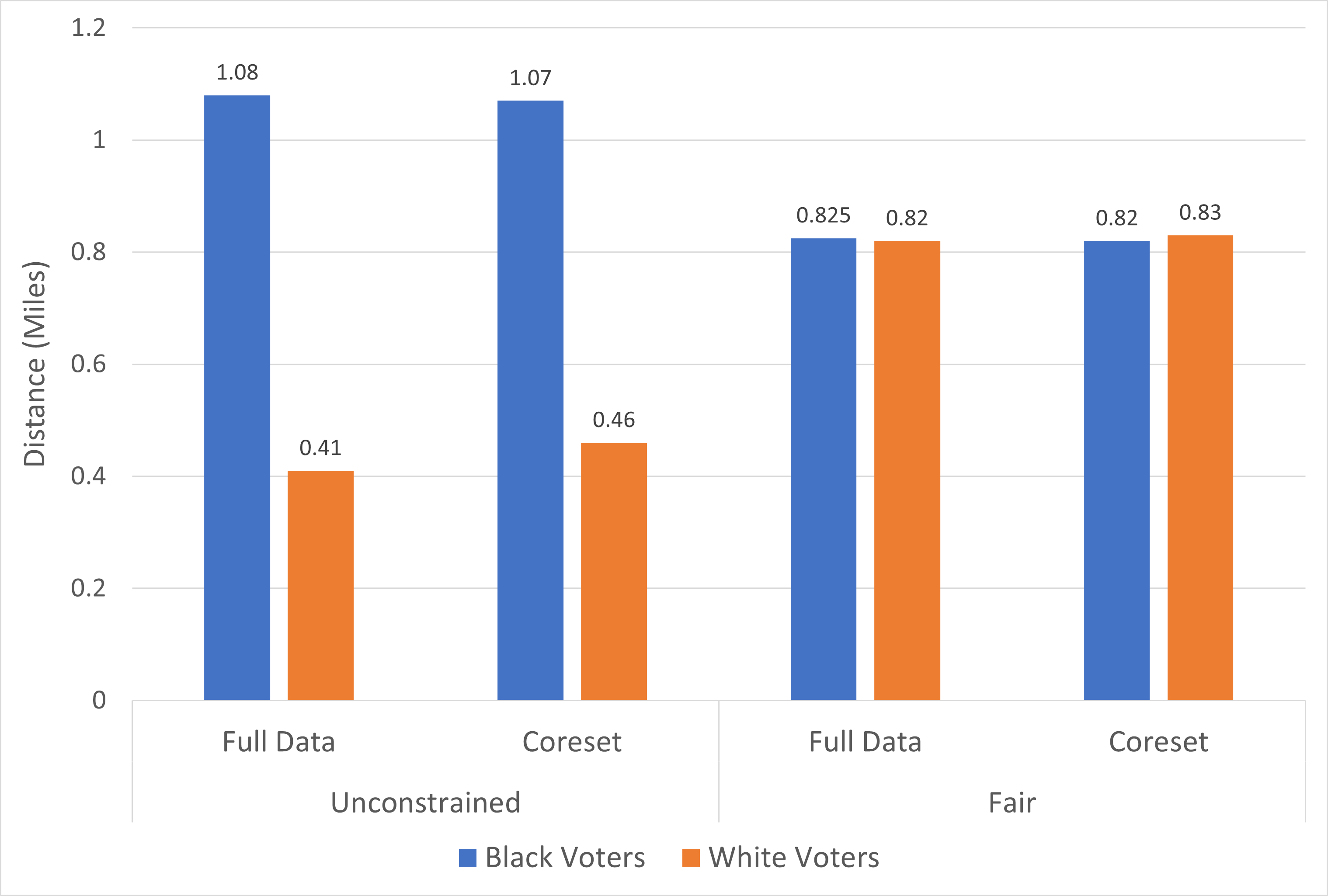}
\caption
{Average distance to polling locations for White and Black voters in sample data from North Carolina records. Results for both regular and fair $k$-median algorithms on the constructed coreset (with $\epsilon=0.1$). We observe are competitive with their corresponding values for the entire sample.}
\label{fig:coreset}
\end{figure*}

\section{Distances to polling place in alternative voting location assignments}
In this section we present the results of the proposed algorithms with respect to the distances between voters of each racial group and their assigned polling locations.

\subsection{Fair $k$-median}
\subsubsection{Absolute distances}
Tables~\ref{tab:fl_fair_dist} and \ref{tab:nc_fair_dist} summarize the absolute distances for the assignment by the fair $k$-median algorithm for Florida and North Carolina, respectively.

\begin{table}[H]
\caption{Florida: distance to polling locations in Fair $k$-median  assignment}
\label{tab:fl_fair_dist}
\centering
\begin{tabular}{@{}lcccccc@{}}
\toprule
\multicolumn{1}{c}{\multirow{3}{*}{Race}} & \multicolumn{3}{c}{Schools and Libraries} & \multicolumn{3}{c}{\begin{tabular}[c]{@{}c@{}}All data points\end{tabular}} \\ \cmidrule(l){2-4} \cmidrule(l){5-7}
\multicolumn{1}{c}{}    & Mean      & Median      & Max        & Mean   & Median    & Max   \\
%\multicolumn{1}{c}{} & 1.15      & 0.87        &   10.01      & 1.03       & 0.78    & 9.97   \\ 
\midrule
White, not Hispanic         & 0.91   & 0.68   & 24.27    & 0.77   & 0.58     & 23.25  \\
Hispanic                    & 0.58   & 0.46   & 16.45    & 0.49   & 0.39     & 15.74   \\
Black, not Hispanic         & 0.60   & 0.48   & 24.27    & 0.52   & 0.41     & 23.25  \\
Unknown                     & 0.66   & 0.51   & 24.27    & 0.55   & 0.43     & 23.25    \\
Asian or Pacific Islander   & 0.70   & 0.59   & 16.07    & 0.59   & 0.49     & 13.53     \\
Other                       & 0.72   & 0.56   & 15.69    & 0.60   & 0.47     & 11.50     \\
Multi-Racial                & 0.69   & 0.55   & 14.31    & 0.57   & 0.45     & 10.78    \\
American Indian             & 0.88   & 0.62   & 17.06    & 0.67   & 0.49     & 12.61    \\ \bottomrule
\end{tabular}
\end{table}

\begin{table}[H]
\caption{North Carolina: distance to polling locations in Fair $k$-median  assignment}
\label{tab:nc_fair_dist}
\centering
\begin{tabular}{@{}lcccccc@{}}
\toprule
\multicolumn{1}{c}{\multirow{3}{*}{Race}} & \multicolumn{3}{c}{Schools and Libraries (Miles)} & \multicolumn{3}{c}{\begin{tabular}[c]{@{}c@{}}All data points (Miles)\end{tabular}} \\ \cmidrule(l){2-4} \cmidrule(l){5-7}
\multicolumn{1}{c}{}    & Mean      & Median      & Max        & Mean   & Median    & Max   \\
%\multicolumn{1}{c}{} & 1.15      & 0.87        &   10.01      & 1.03       & 0.78    & 9.97   \\ 
\midrule
White    & 1.33      & 1.00    & 13.66      & 1.14      & 0.82  & 13.66  \\
Black or African American   & 1.06      & 0.77        & 11.40    & 0.90   & 0.65    & 11.46   \\
Undesignated    & 1.17      & 0.86        & 11.48      &    0.97   &   0.69  &   11.47  \\
Other   & 1.06      & 0.80        & 10.79      & 0.85   & 0.64      & 10.87    \\
Asian   & 0.88      & 0.72        & 9.45      & 0.68   & 0.55      & 9.39     \\
American Indian or Alaska Native    & 1.65      & 1.28        & 9.61      & 1.31   & 1.01      & 10.02     \\
Two or more races   & 1.01      & 0.76        & 9.52      & 0.80   & 0.60  & 9.69    \\
Native Hawaiian  or Pacific Islander    & 1.18      & 0.79        & 6.43       & 0.82   & 0.58       & 4.26    \\ \bottomrule
\end{tabular}
\end{table}

\subsubsection{Normalized distances}
Tables~\ref{tab:fl_fair_dist_normalized} and \ref{tab:nc_fair_dist_normalized} summarize the normalized median distances for the assignment by the fair $k$-median algorithm. Columns ``Schools and Libraries'' and ``All data points'' indicate the valid polling locations considered in the solution, and sub-columns ``Nearest school/library'' and ``Median pairwise voter distance'' indicate the normalization baseline.

\begin{table}[H]
\caption{Florida: Median normalized distances to polling locations in Fair $k$-median  assignment}
\label{tab:fl_fair_dist_normalized}
\centering
\resizebox{\textwidth}{!}{
\begin{tabular}{@{}lcccccc@{}}
\toprule
\multicolumn{1}{c}{\multirow{3}{*}{Race}} & \multicolumn{2}{c}{Schools and Libraries} & \multicolumn{2}{c}{\begin{tabular}[c]{@{}c@{}}All data points\end{tabular}} \\ \cmidrule(l){2-3} \cmidrule(l){4-5}
\multicolumn{1}{c}{}    & Nearest school/library      & Median pairwise voter distance        & Nearest school/library      & Median pairwise voter distance   \\
%\multicolumn{1}{c}{} & 1.15      & 0.87        &   10.01      & 1.03       & 0.78    & 9.97   \\ 
\midrule
White, not Hispanic         & 1   & 0.77   & 0.71    & 0.67 \\
Hispanic                    & 1   & 0.78   & 0.76    & 0.67 \\
Black, not Hispanic         & 1   & 0.76   & 0.82    & 0.66 \\
Unknown                     & 1   & 0.76   & 0.75    & 0.65 \\
Asian or Pacific Islander   & 1   & 0.75   & 0.74    & 0.64 \\
Other                       & 1   & 0.76   & 0.73    & 0.65 \\
Multi-Racial                & 1   & 0.75   & 0.74    & 0.64 \\
American Indian             & 1   & 0.74   & 0.69    & 0.60  \\ \bottomrule
\end{tabular}
}
\end{table}

\begin{table}[H]
\caption{North Carolina: Median normalized distances to polling locations in Fair $k$-median  assignment}
\label{tab:nc_fair_dist_normalized}
\centering
\resizebox{\textwidth}{!}{
\begin{tabular}{@{}lcccccc@{}}
\toprule
\multicolumn{1}{c}{\multirow{3}{*}{Race}} & \multicolumn{2}{c}{Schools and Libraries} & \multicolumn{2}{c}{\begin{tabular}[c]{@{}c@{}}All data points\end{tabular}} \\ \cmidrule(l){2-3} \cmidrule(l){4-5}
\multicolumn{1}{c}{}    & Nearest school/library      & Median pairwise voter distance        & Nearest school/library      & Median pairwise voter distance   \\
%\multicolumn{1}{c}{} & 1.15      & 0.87        &   10.01      & 1.03       & 0.78    & 9.97   \\ 
\midrule
White                               & 1   & 0.79   & 0.78    & 0.68 \\
Black or African American           & 1   & 0.82   & 0.87    & 0.69 \\
Undesignated                        & 1   & 0.80   & 0.78    & 0.66 \\
Other                               & 1   & 0.81   & 0.78    & 0.65 \\
Asian                               & 1   & 0.82   & 0.72    & 0.64 \\
American Indian or Alaska Native    & 1   & 0.75   & 0.75    & 0.59 \\
Two or more races                   & 1   & 0.81   & 0.76    & 0.65 \\
Native Hawaiian  or Pacific Islander& 1   & 0.74   & 0.60    & 0.50  \\ \bottomrule
\end{tabular}
}
\end{table}

\subsection{Balanced fair $k$-median}
\label{app:balanced}

\subsubsection{Absolute distances}
\label{sec:exp_balfair_med}

The results of $L$-balanced fair $k$-median algorithm for $\epsilon = 0.1$, $\epsilon = 0.5$ and $\epsilon = 0.9$ for the states of Florida and North Carolina are presented in Tables~\ref{tab:fl_balanced_fair_dist_1}, \ref{tab:nc_balanced_fair_dist_1}, \ref{tab:fl_balanced_fair_dist_5}, \ref{tab:nc_balanced_fair_dist_5}, \ref{tab:fl_balanced_fair_dist_9}, and \ref{tab:nc_balanced_fair_dist_9}.

\begin{table}[H]
\caption{Florida: distance to polling locations in Balanced Fair $k$-median  assignment for $\epsilon = 0.1$}
\label{tab:fl_balanced_fair_dist_1}
\centering
\begin{tabular}{@{}lcccccc@{}}
\toprule
\multicolumn{1}{c}{\multirow{3}{*}{Race}} & \multicolumn{3}{c}{Schools and Libraries} & \multicolumn{3}{c}{\begin{tabular}[c]{@{}c@{}}All data points\end{tabular}} \\ \cmidrule(l){2-4} \cmidrule(l){5-7}
\multicolumn{1}{c}{}    & Mean      & Median      & Max        & Mean   & Median    & Max   \\
\midrule
White, not Hispanic         & 1.12   & 0.73   & 24.27    & 0.77   & 0.58     & 23.25  \\
Hispanic                    & 0.69   & 0.48   & 16.45    & 0.48   & 0.39     & 15.74   \\
Black, not Hispanic         & 0.60   & 0.50   & 24.27    & 0.52   & 0.41     & 23.25  \\
Unknown                     & 0.79   & 0.54   & 24.27    & 0.55   & 0.43     & 23.25    \\
Asian or Pacific Islander   & 0.86   & 0.63   & 16.07    & 0.59   & 0.49     & 13.43     \\
Other                       & 0.88   & 0.60   & 15.69    & 0.60   & 0.47     & 11.50     \\
Multi-Racial                & 0.84   & 0.58   & 14.31    & 0.57   & 0.45     & 10.78    \\
American Indian             & 1.06   & 0.66   & 17.06    & 0.67   & 0.49     & 12.60    \\ \bottomrule
\end{tabular}
\end{table}

\begin{table}[H]
\caption{North Carolina: distance to polling locations in Balanced Fair $k$-median assignment for $\epsilon = 0.1$}
\label{tab:nc_balanced_fair_dist_1}
\centering
\begin{tabular}{@{}lcccccc@{}}
\toprule
\multicolumn{1}{c}{\multirow{3}{*}{Race}} & \multicolumn{3}{c}{Schools and Libraries (Miles)} & \multicolumn{3}{c}{\begin{tabular}[c]{@{}c@{}}All data points (Miles)\end{tabular}} \\ \cmidrule(l){2-4} \cmidrule(l){5-7}
\multicolumn{1}{c}{}    & Mean      & Median      & Max        & Mean   & Median    & Max   \\
%\multicolumn{1}{c}{} & 1.15      & 0.87        &   10.01      & 1.03       & 0.78    & 9.97   \\ 
\midrule
White                       & 1.59      & 1.11        & 25.99      & 1.14   & 0.82    & 13.66  \\
Black or African American   & 1.20      & 0.81        & 25.64      & 0.90   & 0.65    & 11.46   \\
Undesignated                & 1.41      & 0.94        & 25.99      & 0.97   & 0.70    & 11.47  \\
Other                       & 1.31      & 0.88        & 25.61      & 0.85   & 0.64    & 10.86    \\
Asian                       & 1.20      & 0.81        & 24.40       & 0.68   & 0.55    & 9.39     \\
American Indian or Alaska Native    & 1.73      & 1.35        & 24.78      & 1.31   & 1.01      & 10.02     \\
Two or more races           & 1.23      & 0.81        & 25.61      & 0.80   & 0.60  & 9.69    \\
Native Hawaiian  or Pacific Islander    & 1.62      & 0.95        & 13.45       & 0.82   & 0.58       & 4.26    \\ \bottomrule
\end{tabular}
\end{table}

\begin{table}[H]
\caption{Florida: distance to polling locations in Balanced Fair $k$-median  assignment for $\epsilon = 0.5$}
\label{tab:fl_balanced_fair_dist_5}
\centering
\begin{tabular}{@{}lcccccc@{}}
\toprule
\multicolumn{1}{c}{\multirow{3}{*}{Race}} & \multicolumn{3}{c}{Schools and Libraries} & \multicolumn{3}{c}{\begin{tabular}[c]{@{}c@{}}All data points\end{tabular}} \\ \cmidrule(l){2-4} \cmidrule(l){5-7}
\multicolumn{1}{c}{}    & Mean      & Median      & Max        & Mean   & Median    & Max   \\
%\multicolumn{1}{c}{} & 1.15      & 0.87        &   10.01      & 1.03       & 0.78    & 9.97   \\ 
\midrule
White, not Hispanic         & 0.97   & 0.69   & 24.27    & 0.77   & 0.58     & 23.25  \\
Hispanic                    & 0.61   & 0.46   & 16.45    & 0.48   & 0.39     & 15.74   \\
Black, not Hispanic         & 0.62   & 0.48   & 24.27    & 0.52   & 0.41     & 23.25  \\
Unknown                     & 0.69   & 0.51   & 24.27    & 0.55   & 0.43     & 23.25    \\
Asian or Pacific Islander   & 0.74   & 0.59   & 16.07    & 0.59   & 0.49     & 13.43     \\
Other                       & 0.75   & 0.56   & 15.69    & 0.60   & 0.47     & 11.50     \\
Multi-Racial                & 0.72   & 0.55   & 14.31    & 0.57   & 0.45     & 10.78    \\
American Indian             & 0.93   & 0.62   & 17.06    & 0.67   & 0.49     & 12.60    \\ \bottomrule
\end{tabular}
\end{table}

\begin{table}[H]
\caption{North Carolina: distance to polling locations in Balanced Fair $k$-median assignment for $\epsilon = 0.5$}
\label{tab:nc_balanced_fair_dist_5}
\centering
\begin{tabular}{@{}lcccccc@{}}
\toprule
\multicolumn{1}{c}{\multirow{3}{*}{Race}} & \multicolumn{3}{c}{Schools and Libraries (Miles)} & \multicolumn{3}{c}{\begin{tabular}[c]{@{}c@{}}All data points (Miles)\end{tabular}} \\ \cmidrule(l){2-4} \cmidrule(l){5-7}
\multicolumn{1}{c}{}    & Mean      & Median      & Max        & Mean   & Median    & Max   \\
%\multicolumn{1}{c}{} & 1.15      & 0.87        &   10.01      & 1.03       & 0.78    & 9.97   \\ 
\midrule
White                       & 1.40      & 1.03        & 18.99      & 1.14   & 0.82    & 13.66  \\
Black or African American   & 1.08      & 0.77        & 18.96      & 0.90   & 0.65    & 11.46   \\
Undesignated                & 1.23      & 0.88        & 18.97      & 0.97   & 0.70    & 11.47  \\
Other                       & 1.12      & 0.82        & 18.94      & 0.85   & 0.64    & 10.86    \\
Asian                       & 0.98      & 0.75        & 18.81       & 0.68   & 0.55    & 9.39     \\
American Indian or Alaska Native    & 1.67      & 1.31        & 18.96      & 1.31   & 1.01      & 10.02     \\
Two or more races           & 1.05      & 0.76        & 18.42      & 0.80   & 0.60  & 9.69    \\
Native Hawaiian  or Pacific Islander    & 1.29      & 0.78        & 11.75       & 0.82   & 0.58       & 4.26    \\ \bottomrule
\end{tabular}
\end{table}

\begin{table}[H]
\caption{Florida: distance to polling locations in Balanced Fair $k$-median  assignment for $\epsilon = 0.9$}
\label{tab:fl_balanced_fair_dist_9}
\centering
\begin{tabular}{@{}lcccccc@{}}
\toprule
\multicolumn{1}{c}{\multirow{3}{*}{Race}} & \multicolumn{3}{c}{Schools and Libraries} & \multicolumn{3}{c}{\begin{tabular}[c]{@{}c@{}}All data points\end{tabular}} \\ \cmidrule(l){2-4} \cmidrule(l){5-7}
\multicolumn{1}{c}{}    & Mean      & Median      & Max        & Mean   & Median    & Max   \\
\midrule
White, not Hispanic         & 0.92   & 0.67   & 24.27    & 0.77   & 0.58     & 23.25  \\
Hispanic                    & 0.59   & 0.45   & 16.45    & 0.48   & 0.39     & 15.74   \\
Black, not Hispanic         & 0.60   & 0.48   & 24.27    & 0.52   & 0.41     & 23.25  \\
Unknown                     & 0.66   & 0.51   & 24.27    & 0.55   & 0.43     & 23.25    \\
Asian or Pacific Islander   & 0.71   & 0.58   & 16.07    & 0.59   & 0.49     & 13.43     \\
Other                       & 0.72   & 0.56   & 15.69    & 0.60   & 0.47     & 11.50     \\
Multi-Racial                & 0.69   & 0.54   & 14.31    & 0.57   & 0.45     & 10.78    \\
American Indian             & 0.89   & 0.62   & 17.06    & 0.67   & 0.49     & 12.60    \\ \bottomrule
\end{tabular}
\end{table}

\begin{table}[H]
\caption{North Carolina: distance to polling locations in Balanced Fair $k$-median assignment for $\epsilon = 0.9$}
\label{tab:nc_balanced_fair_dist_9}
\centering
\begin{tabular}{@{}lcccccc@{}}
\toprule
\multicolumn{1}{c}{\multirow{3}{*}{Race}} & \multicolumn{3}{c}{Schools and Libraries (Miles)} & \multicolumn{3}{c}{\begin{tabular}[c]{@{}c@{}}All data points (Miles)\end{tabular}} \\ \cmidrule(l){2-4} \cmidrule(l){5-7}
\multicolumn{1}{c}{}    & Mean      & Median      & Max        & Mean   & Median    & Max   \\
%\multicolumn{1}{c}{} & 1.15      & 0.87        &   10.01      & 1.03       & 0.78    & 9.97   \\ 
\midrule
White                       & 1.34      & 1.00        & 18.98      & 1.14   & 0.82    & 13.66  \\
Black or African American   & 1.05      & 0.77        & 18.96      & 0.90   & 0.65    & 11.46   \\
Undesignated                & 1.18      & 0.86        & 18.97      & 0.97   & 0.70    & 11.47  \\
Other                       & 1.07      & 0.80        & 18.94      & 0.85   & 0.64    & 10.86    \\
Asian                       & 0.90      & 0.73        & 18.63       & 0.68   & 0.55    & 9.39     \\
American Indian or Alaska Native    & 1.65      & 1.29        & 18.90      & 1.31   & 1.01      & 10.02     \\
Two or more races           & 1.01      & 0.75        & 9.52      & 0.80   & 0.60  & 9.69    \\
Native Hawaiian  or Pacific Islander    & 1.20      & 0.77        & 6.43       & 0.82   & 0.58       & 4.26    \\ \bottomrule
\end{tabular}
\end{table}

\subsubsection{Normalized distances}
Tables \ref{tab:fl_balanced_fair_dist_normalized_1}, \ref{tab:nc_balanced_fair_dist_normalized_1}, \ref{tab:fl_balanced_fair_dist_normalized_5}, \ref{tab:nc_balanced_fair_dist_normalized_5}, \ref{tab:fl_balanced_fair_dist_normalized_9} and \ref{tab:nc_balanced_fair_dist_normalized_9} summarize the normalized median distances for the assignment by $L$-balanced fair $k$-median algorithm for $\epsilon = 0.1$, $\epsilon = 0.5$ and $\epsilon = 0.9$. Columns ``Schools and Libraries'' and ``All data points'' indicate the valid polling locations considered in the solution, and sub-columns ``Nearest school/library'' and ``Median pairwise voter distance'' indicate the normalization baseline.

\begin{table}[H]
\caption{Florida: Median normalized distances to polling locations in $L$-Balanced Fair $k$-median assignment for $\epsilon = 0.1$}
\label{tab:fl_balanced_fair_dist_normalized_1}
\centering
\resizebox{\textwidth}{!}{
\begin{tabular}{@{}lcccccc@{}}
\toprule
\multicolumn{1}{c}{\multirow{3}{*}{Race}} & \multicolumn{2}{c}{Schools and Libraries} & \multicolumn{2}{c}{\begin{tabular}[c]{@{}c@{}}All data points\end{tabular}} \\ \cmidrule(l){2-3} \cmidrule(l){4-5}
\multicolumn{1}{c}{}    & Nearest school/library      & Median pairwise voter distance        & Nearest school/library      & Median pairwise voter distance   \\
%\multicolumn{1}{c}{} & 1.15      & 0.87        &   10.01      & 1.03       & 0.78    & 9.97   \\ 
\midrule
White, not Hispanic         & 1   & 0.82   & 0.71    & 0.67 \\
Hispanic                    & 1   & 0.82   & 0.76    & 0.67 \\
Black, not Hispanic         & 1   & 0.78   & 0.82    & 0.66 \\
Unknown                     & 1   & 0.81   & 0.75    & 0.65 \\
Asian or Pacific Islander   & 1   & 0.81   & 0.74    & 0.65 \\
Other                       & 1   & 0.81   & 0.73    & 0.66 \\
Multi-Racial                & 1   & 0.80   & 0.74    & 0.64 \\
American Indian             & 1   & 0.78   & 0.69    & 0.60  \\ \bottomrule
\end{tabular}
}
\end{table}

\begin{table}[H]
\caption{North Carolina: Median normalized distances to polling locations in $L$-Balanced Fair $k$-median assignment for $\epsilon = 0.1$}
\label{tab:nc_balanced_fair_dist_normalized_1}
\centering
\resizebox{\textwidth}{!}{
\begin{tabular}{@{}lcccccc@{}}
\toprule
\multicolumn{1}{c}{\multirow{3}{*}{Race}} & \multicolumn{2}{c}{Schools and Libraries} & \multicolumn{2}{c}{\begin{tabular}[c]{@{}c@{}}All data points\end{tabular}} \\ \cmidrule(l){2-3} \cmidrule(l){4-5}
\multicolumn{1}{c}{}    & Nearest school/library      & Median pairwise voter distance        & Nearest school/library      & Median pairwise voter distance   \\
%\multicolumn{1}{c}{} & 1.15      & 0.87        &   10.01      & 1.03       & 0.78    & 9.97   \\ 
\midrule
White                               & 1   & 0.83   & 0.78    & 0.68 \\
Black or African American           & 1   & 0.85   & 0.87    & 0.69 \\
Undesignated                        & 1   & 0.85   & 0.79    & 0.66 \\
Other                               & 1   & 0.87   & 0.78    & 0.65 \\
Asian                               & 1   & 0.90   & 0.72    & 0.64 \\
American Indian or Alaska Native    & 1   & 0.76   & 0.75    & 0.59 \\
Two or more races                   & 1   & 0.86   & 0.78    & 0.65 \\
Native Hawaiian  or Pacific Islander& 1   & 0.80   & 0.60    & 0.50  \\ \bottomrule
\end{tabular}
}
\end{table}

\begin{table}[H]
\caption{Florida: Median normalized distances to polling locations in $L$-Balanced Fair $k$-median assignment for $\epsilon = 0.5$}
\label{tab:fl_balanced_fair_dist_normalized_5}
\centering
\resizebox{\textwidth}{!}{
\begin{tabular}{@{}lcccccc@{}}
\toprule
\multicolumn{1}{c}{\multirow{3}{*}{Race}} & \multicolumn{2}{c}{Schools and Libraries} & \multicolumn{2}{c}{\begin{tabular}[c]{@{}c@{}}All data points\end{tabular}} \\ \cmidrule(l){2-3} \cmidrule(l){4-5}
\multicolumn{1}{c}{}    & Nearest school/library      & Median pairwise voter distance        & Nearest school/library      & Median pairwise voter distance   \\
%\multicolumn{1}{c}{} & 1.15      & 0.87        &   10.01      & 1.03       & 0.78    & 9.97   \\ 
\midrule
White, not Hispanic         & 1   & 0.78   & 0.71    & 0.67 \\
Hispanic                    & 1   & 0.78   & 0.76    & 0.67 \\
Black, not Hispanic         & 1   & 0.76   & 0.82    & 0.66 \\
Unknown                     & 1   & 0.77   & 0.75    & 0.65 \\
Asian or Pacific Islander   & 1   & 0.76   & 0.74    & 0.65 \\
Other                       & 1   & 0.77   & 0.73    & 0.66 \\
Multi-Racial                & 1   & 0.76   & 0.74    & 0.64 \\
American Indian             & 1   & 0.75   & 0.69    & 0.60  \\ \bottomrule
\end{tabular}
}
\end{table}

\begin{table}[H]
\caption{North Carolina: Median normalized distances to polling locations in $L$-Balanced Fair $k$-median assignment for $\epsilon = 0.5$}
\label{tab:nc_balanced_fair_dist_normalized_5}
\centering
\resizebox{\textwidth}{!}{
\begin{tabular}{@{}lcccccc@{}}
\toprule
\multicolumn{1}{c}{\multirow{3}{*}{Race}} & \multicolumn{2}{c}{Schools and Libraries} & \multicolumn{2}{c}{\begin{tabular}[c]{@{}c@{}}All data points\end{tabular}} \\ \cmidrule(l){2-3} \cmidrule(l){4-5}
\multicolumn{1}{c}{}    & Nearest school/library      & Median pairwise voter distance        & Nearest school/library      & Median pairwise voter distance   \\
%\multicolumn{1}{c}{} & 1.15      & 0.87        &   10.01      & 1.03       & 0.78    & 9.97   \\ 
\midrule
White                               & 1   & 0.80   & 0.78    & 0.68 \\
Black or African American           & 1   & 0.82   & 0.87    & 0.69 \\
Undesignated                        & 1   & 0.80   & 0.79    & 0.66 \\
Other                               & 1   & 0.83   & 0.78    & 0.65 \\
Asian                               & 1   & 0.86   & 0.72    & 0.64 \\
American Indian or Alaska Native    & 1   & 0.75   & 0.75    & 0.59 \\
Two or more races                   & 1   & 0.83   & 0.78    & 0.65 \\
Native Hawaiian  or Pacific Islander& 1   & 0.75   & 0.60    & 0.50  \\ \bottomrule
\end{tabular}
}
\end{table}

\begin{table}[H]
\caption{Florida: Median normalized distances to polling locations in $L$-Balanced Fair $k$-median assignment for $\epsilon = 0.9$}
\label{tab:fl_balanced_fair_dist_normalized_9}
\centering
\resizebox{\textwidth}{!}{
\begin{tabular}{@{}lcccccc@{}}
\toprule
\multicolumn{1}{c}{\multirow{3}{*}{Race}} & \multicolumn{2}{c}{Schools and Libraries} & \multicolumn{2}{c}{\begin{tabular}[c]{@{}c@{}}All data points\end{tabular}} \\ \cmidrule(l){2-3} \cmidrule(l){4-5}
\multicolumn{1}{c}{}    & Nearest school/library      & Median pairwise voter distance        & Nearest school/library      & Median pairwise voter distance   \\
%\multicolumn{1}{c}{} & 1.15      & 0.87        &   10.01      & 1.03       & 0.78    & 9.97   \\ 
\midrule
White, not Hispanic         & 1   & 0.75   & 0.71    & 0.67 \\
Hispanic                    & 1   & 0.76   & 0.76    & 0.67 \\
Black, not Hispanic         & 1   & 0.82   & 0.82    & 0.66 \\
Unknown                     & 1   & 0.75   & 0.75    & 0.65 \\
Asian or Pacific Islander   & 1   & 0.74   & 0.74    & 0.65 \\
Other                       & 1   & 0.73   & 0.73    & 0.66 \\
Multi-Racial                & 1   & 0.74   & 0.74    & 0.64 \\
American Indian             & 1   & 0.69   & 0.69    & 0.60  \\ \bottomrule
\end{tabular}
}
\end{table}

\begin{table}[H]
\caption{North Carolina: Median normalized distances to polling locations in $L$-Balanced Fair $k$-median assignment for $\epsilon = 0.9$}
\label{tab:nc_balanced_fair_dist_normalized_9}
\centering
\resizebox{\textwidth}{!}{
\begin{tabular}{@{}lcccccc@{}}
\toprule
\multicolumn{1}{c}{\multirow{3}{*}{Race}} & \multicolumn{2}{c}{Schools and Libraries} & \multicolumn{2}{c}{\begin{tabular}[c]{@{}c@{}}All data points\end{tabular}} \\ \cmidrule(l){2-3} \cmidrule(l){4-5}
\multicolumn{1}{c}{}    & Nearest school/library      & Median pairwise voter distance        & Nearest school/library      & Median pairwise voter distance   \\
%\multicolumn{1}{c}{} & 1.15      & 0.87        &   10.01      & 1.03       & 0.78    & 9.97   \\ 
\midrule
White                               & 1   & 0.80   & 0.78    & 0.68 \\
Black or African American           & 1   & 0.82   & 0.87    & 0.69 \\
Undesignated                        & 1   & 0.81   & 0.79    & 0.66 \\
Other                               & 1   & 0.82   & 0.78    & 0.65 \\
Asian                               & 1   & 0.83   & 0.72    & 0.64 \\
American Indian or Alaska Native    & 1   & 0.75   & 0.75    & 0.59 \\
Two or more races                   & 1   & 0.82   & 0.78    & 0.65 \\
Native Hawaiian  or Pacific Islander& 1   & 0.74   & 0.60    & 0.50  \\ \bottomrule
\end{tabular}
}
\end{table}

\subsection{$k$-center}
\label{app:kcenter}
Tables~\ref{tab:fl_kcenter} and \ref{tab:nc_kcenter} summarize the normalized median distances for the assignments produced by the unconstrained $k$-center algorithm.

\begin{table}[H]
\caption{Florida: distance to polling places selected by unconstrained $k$-center}
\label{tab:fl_kcenter}
\centering
\begin{tabular}{@{}lccc@{}}
\toprule
Race & \begin{tabular}[c]{@{}c@{}}Mean distance\\ (Miles)\end{tabular} & \begin{tabular}[c]{@{}c@{}}Median distance\\ (Miles)\end{tabular} & \begin{tabular}[c]{@{}c@{}}Maximum distance\\ (Miles)\end{tabular} \\ \midrule
White, not Hispanic & 1.21 & 0.94 & 16.27 \\
Hispanic & 0.79 & 0.62 & 11.17 \\
Black, not Hispanic & 0.92 & 0.71 & 13.14 \\
Unknown & 0.89 & 0.71 & 10.02 \\
Asian or Pacific Islander & 0.95 & 0.77 & 10.56 \\
Other & 0.97 & 0.74 & 10.89 \\
Multi-Racial & 0.97 & 0.76 & 9.77 \\
American Indian & 1.22 & 0.90 & 11.73 \\ \bottomrule
\end{tabular}
\end{table}

\begin{table}[H]
\caption{North Carolina: distance to polling places selected by unconstrained $k$-center}
\label{tab:nc_kcenter}
\centering
\begin{tabular}{@{}lccc@{}}
\toprule
Race & \begin{tabular}[c]{@{}c@{}}Mean distance\\ (Miles)\end{tabular} & \begin{tabular}[c]{@{}c@{}}Median distance\\ (Miles)\end{tabular} & \begin{tabular}[c]{@{}c@{}}Maximum distance\\ (Miles)\end{tabular} \\ \midrule
White & 1.76 & 1.35 & 12.18 \\
Black or African American & 1.55 & 1.11 & 11.29 \\
Undesignated & 1.60 & 1.19 & 11.66 \\
Other & 1.46 & 1.09 & 12.25 \\
Asian & 1.18 & 0.95 & 12.41 \\
American Indian or Alaska Native & 2.46 & 2.20 & 11.02 \\
Two or more races & 1..44 & 1.08 & 11.08 \\
Native Hawaiian or Pacific Islander & 1.78 & 1.29 & 10.01 \\ \bottomrule
\end{tabular}
\end{table}

\subsection{Balanced $k$-center}

\subsubsection{Absolute distances}
Tables \ref{tab:fl_kcenter_balanced} and \ref{tab:nc_kcenter_balanced} summarize the normalized median distances for the assignments produced by the balanced $k$-center algorithm.

\begin{table}[H]
\caption{Florida: distance to polling places selected by balanced $k$-center algorithm, for three different capacities. }
\label{tab:fl_kcenter_balanced}
\begin{tabular}{@{}lccccccccc@{}}
\toprule
\multirow{2}{*}{Race} & \multicolumn{3}{c}{$\epsilon = 0.1$} & \multicolumn{3}{c}{$\epsilon = 0.5$} & \multicolumn{3}{c}{$\epsilon = 0.9$} \\ \cmidrule(l){2-4} \cmidrule(l){5-7} \cmidrule(l){8-10} 
 & \begin{tabular}[c]{@{}c@{}}Mean\\ (Miles)\end{tabular} & \begin{tabular}[c]{@{}c@{}}Medium\\ (Miles)\end{tabular} & \begin{tabular}[c]{@{}c@{}}Max\\ (Miles)\end{tabular} & \begin{tabular}[c]{@{}c@{}}Mean\\ (Miles)\end{tabular} & \begin{tabular}[c]{@{}c@{}}Medium\\ (Miles)\end{tabular} & \begin{tabular}[c]{@{}c@{}}Max\\ (Miles)\end{tabular} & \begin{tabular}[c]{@{}c@{}}Mean\\ (Miles)\end{tabular} & \begin{tabular}[c]{@{}c@{}}Medium\\ (Miles)\end{tabular} & \begin{tabular}[c]{@{}c@{}}Max\\ (Miles)\end{tabular} \\ \cmidrule(r){1-10}
White, not Hispanic & 3.62 & 1.48 & 50.30 & 2.24 & 1.20 & 27.33 & 1.68 & 1.08 & 19.14 \\
Hispanic & 2.54 & 0.89 & 50.26 & 1.50 & 0.73 & 27.25 & 1.10 & 0.68 & 17.22 \\
Black, not Hispanic & 3.19 & 1.10 & 50.17 & 1.92 & 0.87 & 27.30 & 1.39 & 0.80 & 17.62 \\
Unknown & 3.00 & 1.02 & 50.13 & 1.77 & 0.85 & 27.30 & 1.28 & 0.77 & 17.68 \\
Asian or Pacific Islander & 3.21 & 1.14 & 46.96 & 1.91 & 0.95 & 27.10 & 1.37 & 0.87 & 16.78 \\
Other & 3.08 & 1.11 & 47.01 & 1.87 & 0.91 & 27.30 & 1.37 & 0.83 & 16.51 \\
Multi-Racial & 3.24 & 1.17 & 47.01 & 1.97 & 0.96 & 27.07 & 1.42 & 0.87 & 18.50 \\
American Indian & 3.72 & 1.49 & 49.99 & 2.31 & 1.18 & 27.35 & 1.71 & 1.06 & 18.50 \\ \bottomrule
\end{tabular}
\end{table}

\begin{table}[H]
\caption{North Carolina: distance to polling places selected by balanced $k$-center algorithm, for three different capacities. }
\label{tab:nc_kcenter_balanced}
\resizebox{\textwidth}{!}{
\begin{tabular}{@{}lccccccccc@{}}
\toprule
\multirow{2}{*}{Race} & \multicolumn{3}{c}{$\epsilon = 0.1$} & \multicolumn{3}{c}{$\epsilon = 0.5$} & \multicolumn{3}{c}{$\epsilon = 0.9$} \\ \cmidrule(l){2-4} \cmidrule(l){5-7} \cmidrule(l){8-10} 
 & \begin{tabular}[c]{@{}c@{}}Mean\\ (Miles)\end{tabular} & \begin{tabular}[c]{@{}c@{}}Medium\\ (Miles)\end{tabular} & \begin{tabular}[c]{@{}c@{}}Max\\ (Miles)\end{tabular} & \begin{tabular}[c]{@{}c@{}}Mean\\ (Miles)\end{tabular} & \begin{tabular}[c]{@{}c@{}}Medium\\ (Miles)\end{tabular} & \begin{tabular}[c]{@{}c@{}}Max\\ (Miles)\end{tabular} & \begin{tabular}[c]{@{}c@{}}Mean\\ (Miles)\end{tabular} & \begin{tabular}[c]{@{}c@{}}Medium\\ (Miles)\end{tabular} & \begin{tabular}[c]{@{}c@{}}Max\\ (Miles)\end{tabular} \\ \cmidrule(r){1-10}
White & 3.49 & 2.09 & 39.43 & 2.48 & 1.74 & 26.80 & 2.14 & 1.57 & 18.71 \\
Black or African American & 3.46 & 1.82 & 38.84 & 2.41 & 1.46 & 26.80 & 2.01 & 1.31 & 15.73 \\
Undesignated & 3.38 & 1.85 & 39.36 & 2.35 & 1.52 & 26.77 & 1.99 & 1.39 & 17.33 \\
Other & 3.29 & 1.66 & 39.15 & 2.25 & 1.38 & 26.77 & 1.88 & 1.26 & 16.84 \\
Asian & 2.98 & 1.45 & 38.95 & 1.93 & 1.22 & 16.71 & 1.59 & 1.11 & 15.61 \\
American Indian or Alaska Native & 4.28 & 2.77 & 39.19 & 3.24 & 2.58 & 26.77 & 2.86 & 2.37 & 17.33 \\
Two or more races & 3.39 & 1.67 & 36.78 & 2.27 & 1.36 & 25.39 & 1.89 & 1.24 & 16.44 \\
Native Hawaiian or Pacific Islander & 3.23 & 1.96 & 17.90 & 2.77 & 1.71 & 13.61 & 2.06 & 1.30 & 13.61 \\ \bottomrule
\end{tabular}
}
\end{table}

\subsubsection{Normalized distances}
Tables \ref{tab:fl_normalized_kcenter_balanced} and \ref{tab:nc_normalized_kcenter_balanced} summarize the normalized median distances for the assignments produced by the balanced $k$-center algorithm.

\begin{table}[H]
\caption{Florida: Normalized distances to polling places selected by balanced $k$-center algorithm, for three different capacities.}
\label{tab:fl_normalized_kcenter_balanced}
\begin{tabular}{@{}lccccccccc@{}}
\toprule
\multirow{2}{*}{Race} & \multicolumn{3}{c}{Nearest School/Library} & \multicolumn{3}{c}{Median pairwise voter distance} \\ \cmidrule(l){2-4} \cmidrule(l){5-7} 
 & \begin{tabular}[c]{@{}c@{}}$\epsilon = 0.1$\end{tabular} & \begin{tabular}[c]{@{}c@{}}$\epsilon = 0.5$\end{tabular} &  \begin{tabular}[c]{@{}c@{}}$\epsilon = 0.9$\end{tabular} & \begin{tabular}[c]{@{}c@{}}$\epsilon = 0.1$\end{tabular} &  \begin{tabular}[c]{@{}c@{}}$\epsilon = 0.5$\end{tabular} & \begin{tabular}[c]{@{}c@{}}$\epsilon = 0.9$\end{tabular} \\ \cmidrule(r){1-7}
White, not Hispanic & 1.79 & 1.42 & 1.29 & 1.56 & 1.30 & 1.19 \\
Hispanic & 1.89 & 1.50 & 1.36 & 1.51 & 1.26 & 1.16\\
Black, not Hispanic  & 2.30 & 1.80 & 1.61 & 1.71 & 1.40 & 1.28\\
Unknown & 1.87 & 1.50 & 1.36 & 1.51 & 1.26 & 1.16\\
Asian or Pacific Islander & 1.87 & 1.48 & 1.34 & 1.46 & 1.24 & 1.14\\
Other & 1.83 & 1.45 & 1.33 & 1.50 & 1.25 & 1.15\\
Multi-Racial & 2.05 & 1.61 & 1.44 & 1.59 & 1.33 & 1.21\\
American Indian & 2.01 & 1.59 & 1.43 & 1.62 & 1.34 & 1.23\\ \bottomrule
\end{tabular}
\end{table}

\begin{table}[H]
\caption{North Carolina: Normalized distances to polling places selected by balanced $k$-center algorithm, for three different capacities.}
\label{tab:nc_normalized_kcenter_balanced}
\begin{tabular}{@{}lccccccccc@{}}
\toprule
\multirow{2}{*}{Race} & \multicolumn{3}{c}{Nearest School/Library} & \multicolumn{3}{c}{Median pairwise voter distance} \\ \cmidrule(l){2-4} \cmidrule(l){5-7} 
 & \begin{tabular}[c]{@{}c@{}}$\epsilon = 0.1$\end{tabular} & \begin{tabular}[c]{@{}c@{}}$\epsilon = 0.5$\end{tabular} &  \begin{tabular}[c]{@{}c@{}}$\epsilon = 0.9$\end{tabular} & \begin{tabular}[c]{@{}c@{}}$\epsilon = 0.1$\end{tabular} &  \begin{tabular}[c]{@{}c@{}}$\epsilon = 0.5$\end{tabular} & \begin{tabular}[c]{@{}c@{}}$\epsilon = 0.9$\end{tabular} \\ \cmidrule(r){1-7}
White & 1.66 & 1.41 & 1.30 & 1.42 & 1.23 & 1.14 \\
Black or African American & 2.15 & 1.79 & 1.57 & 1.62 & 1.39 & 1.29\\
Undesignated & 1.83 & 1.52 & 1.40 & 1.47 & 1.28 & 1.19\\
Other & 1.89 & 1.56 & 1.43 & 1.49 & 1.30 & 1.20\\
Asian & 1.93 & 1.59 & 1.42 & 1.55 & 1.33 & 1.24\\
American Indian or Alaska Native & 1.74 & 1.56 & 1.37 & 1.42 & 1.30 & 1.19\\
Two or more races & 2.04 & 1.65 & 1.52 & 1.57 & 1.37 & 1.26\\
Native Hawaiian or Pacific Islander & 1.70 & 1.67 & 1.38 & 1.46 & 1.47 & 1.13\\ \bottomrule
\end{tabular}
\end{table}

\section{Polling place load in alternative voting location assignments}
Tables \ref{tab:fl_normalized_load} and \ref{tab:nc_normalized_load} summarize the normalized polling locations' loads for the assignments produced by the proposed fair algorithms.

\begin{table}[H]
\caption{Florida: normalized polling site load for different fair algorithms}
\label{tab:fl_normalized_load}
\centering
\begin{tabular}{@{}lccccccc@{}}
\toprule
Race & Fair $k$-median & \multicolumn{3}{c}{\begin{tabular}[c]{@{}c@{}} $L$-Balanced Fair $k$-median\end{tabular}}
& \multicolumn{3}{c}{\begin{tabular}[c]{@{}c@{}} Balanced $k$-center\end{tabular}}\\ \cmidrule(l){3-5} \cmidrule(l){6-8}
\multicolumn{1}{c}{} & & $\epsilon = 0.1$ & $\epsilon = 0.5$ & $\epsilon = 0.9$ & $\epsilon = 0.1$ & $\epsilon = 0.5$ & $\epsilon = 0.9$\\\midrule
White, not Hispanic & 1 & 1 & 1 & 1 & 1 & 1 & 1\\
Hispanic & 0.91 & 0.96 & 0.93 & 0.92 & 0.85 & 0.81 & 0.85 \\
Black, not Hispanic & 0.86 & 0.93 & 0.92 & 0.89 & 0.88 & 0.95 & 0.88\\
Unknown & 0.91 & 0.97 & 0.96 & 0.94 & 0.89 & 0.94 & 0.86\\
Asian or Pacific Islander & 0.97 & 1.01 & 1.01 & 1.02 & 0.97 & 0.97 & 0.99\\
Other & 0.96 & 1 & 0.99 & 0.99 & 0.91 & 0.95 & 0.95\\
Multi-Racial & 0.95 & 0.99 & 0.99 & 0.98 & 0.93 & 0.96 & 0.96 \\
American Indian & 0.94 & 0.96 & 0.96 & 0.95 & 0.94 & 0.97 & 0.99\\ \bottomrule
\end{tabular}
\end{table}

\begin{table}[H]
\caption{North Carolina: normalized polling site load for different fair algorithms}
\label{tab:nc_normalized_load}
\centering
\begin{tabular}{@{}lccccccc@{}}
\toprule
Race & Fair $k$-median & \multicolumn{3}{c}{\begin{tabular}[c]{@{}c@{}} $L$-Balanced Fair $k$-median\end{tabular}}
& \multicolumn{3}{c}{\begin{tabular}[c]{@{}c@{}} Balanced $k$-center\end{tabular}}\\ \cmidrule(l){3-5} \cmidrule(l){6-8}
\multicolumn{1}{c}{} & & $\epsilon = 0.1$ & $\epsilon = 0.5$ & $\epsilon = 0.9$ & $\epsilon = 0.1$ & $\epsilon = 0.5$ & $\epsilon = 0.9$\\\midrule
White & 1 & 1 & 1 & 1 & 1 & 1 & 1\\
Black or African American & 1.12 & 1.02 & 1.05 & 1.07 & 1.02 & 1.03 & 1.07\\
Undesignated & 1.08 & 1.04 & 1.05 & 1.07 & 1.07 & 1.05 & 1.08\\
Other & 1.12 & 1.07 & 1.11 & 1.12 & 1.17 & 1.09 & 1.15\\
Asian & 1.12 & 1.10 & 1.15 & 1.18 &  1.25 & 1.23 & 1.21\\
American Indian or Alaska Native & 0.74 & 0.79 & 0.77 & 0.71 & 0.62 & 0.66 & 0.71\\
Two or more races & 1.13 & 1.06 & 1.08 & 1.11 & 1.13 & 1.09 & 1.14 \\
Native Hawaiian or Pacific Islander & 1.08 & 1.02 & 1.06 & 1.05 & 1.01 & 1.05 & 1.13\\ \bottomrule
\end{tabular}
\end{table}

\end{document}